\def\textsubscript#1%
\def\cdd{\mbox{\boldmath$\cdot$}~}
\def\C {\ensuremath{\mathbb{C}}}
\def\Q {\ensuremath{\mathbb{Q}}}
\def\R {\ensuremath{\mathbb{R}}}
\renewcommand{\dim}[1]{\mbox{{\rm dim}$(#1)$}}
\newcommand{\Jac}{\mathcal{J}}
\def\@oddfoot{\hfill}
\def\setshumei#1#2#3{%
  \shumeicount=\count0
  \def\@oddhead{%
    \raise-5pt\hbox to0pt{\vrule width\hsize height 0pt depth 0.4pt\hss}\relax
    \ifnum \shumeicount=\count0
      \raise-7pt\hbox to0pt{\vrule width\hsize height 0pt depth 0.4pt\hss}\relax
      #1
    \else
      \ifodd\count0
        #2
      \else
        #3
       \fi
     \fi
  }%
}
\def\@oddfoot{\hfill}
\def\setshujiao{%
  \shujiaocount=\count0
  \def\@oddfoot{%
      \ifodd\count0
      \else
      \fi
  }%
}
\def\title#1#2#3#4{{
  \vspace*{0.3cm}
  \begin{flushleft} \Large\bf #1\end{flushleft}
  \vspace*{-0.2cm}
      \begin{flushleft}
      \bf #2
      \end{flushleft}
      \footnotetext{\hspace{-6mm} #3\\ #4}}}
\def\dshm#1#2#3#4
\def\drd#1#2
\def\tilde{\widetilde}
\def\epsilon{\varepsilon}
\begin{document}

\title{Visualizing Planar and Space Implicit Real Algebraic Curves with Singularities$^*$}
{\uppercase{Chen} Changbo$^{a,b}$ \cdd \uppercase{Wu}
Wenyuan$^{a,b}$ \cdd \uppercase{Feng} Yong$^{a,b}$}
{
  Address: $^a$Chongqing Key Laboratory of Automated Reasoning and Cognition, Chongqing Institute of Green
  and Intelligent Technology, Chinese Academy of Sciences, Chongqing 400714, China\\
  $^b$University of Chinese Academy of Sciences, Beijing 100049, China\\
    Email: chenchangbo@cigit.ac.cn, wuwenyuan@cigit.ac.cn, yongfeng@cigit.ac.cn} 
{$^*$This research was supported by  NSFC (11771421, 11671377, 61572024),
  CAS ``Light of West China'' Program, 
cstc2018jcyj-yszxX0002 of Chongqing
and the Key Research Program of Frontier Sciences of CAS (QYZDB-SSW-SYS026).\\
}



\dshm{20XX}{XX}{A TEMPLATE FOR JOURNAL}{\uppercase{Chen Changbo} $\cdd$ \uppercase{Wu Wenyuan} $\cdd$
\uppercase{Feng Yong}}

\Abstract{We present a new method for visualizing implicit real algebraic curves  inside a bounding box
  in the $2$-D or $3$-D ambient space based
  on numerical continuation and critical point methods.
  The underlying techniques work also for tracing space curve in higher-dimensional space.
  Since the topology of a curve near a singular point of it is not numerically stable,
  we trace only the curve outside neighborhoods of singular points
  and replace each neighborhood simply by a point,
  which produces a polygonal approximation that is $\epsilon$-close
  to the curve.
  Such an approximation is more stable for defining
  the numerical connectedness of the complement of the projection of the curve in $\R^2$,
  which is important for applications such as solving bi-parametric polynomial systems.
  The algorithm starts by computing three types of key points of the curve,
  namely the intersection of the curve with small spheres centered at singular points,
  regular critical points of every connected components of the curve,
  as well as intersection points of the curve with the given bounding box.
  It then traces the curve starting with and in the order of the above three types of points.
  This basic scheme is further enhanced by several optimizations, such as grouping singular points in natural clusters,
  tracing the curve by a try-and-resume strategy and handling ``pseudo singular points''.
  The effectiveness of the algorithm is illustrated by numerous examples.
  This manuscript extends our preliminary results that appeared in CASC 2018.
}      

\Keywords{Continuation method, critical point method, real algebraic curve, singularity}        

\section{Introduction}
Visualizing an implicit plane or space real algebraic curve is a classical and fundamental
problem in computational geometry and computer graphics.
There have been many works on this topic both in $2$-D~\cite{Bresenham1977,Chandler1988,Hong1996,Lopes2002,Sagraloff2009,Labs2010,Burr2012,Jin2015}
and $3$-D~\cite{Daouda2008,Jin2014,DBLP:phd/dnb/Stussak13} cases.
In the literature, a correct visualization usually requires two conditions:
$(i)$ the generated polygonal approximation is $\epsilon$-close to the curve,
and $(ii)$ the approximation is ``topologically correct'', which often means
that the approximation is isotopic to the curve.
There are also many works~\cite{Daouda2008,Cheng2010,Seidel2005,Imbach2017} focusing only on $(ii)$.

Different techniques~\cite{Gomes2014,DBLP:journals/toms/BrakeBHHSW17} exist for visualizing plane and space curves,
such as implicit-to-parametric conversion, curve continuation and space subdivision.
Symbolic or hybrid symbolic-numeric approaches stand out for being capable of
computing the exact topology and many of them are variants of cylindrical algebraic decomposition.
For continuation based approach, several difficulties arise,
such as finding at least one seed point from each connected component, dealing with curve jumping
and handling singularities.
Each of the three problems has its own interests.
For instance, there are several approaches for computing at least one witness points for a real variety,
either symbolically~\cite{col75,Rouillier2000,Chen2013} or numerically~\cite{Hauenstein2012,WRF2017,Wu2017}.
Different techniques for robustly tracing curves are proposed~\cite{Blum1997,Beltran2013,Martin2013,Yu2014,WRF2017}. 
Techniques for handling singularities also exist~\cite{Bajaj1997,Leykin2006,Chen2017}.

For curves with singularities, observe that
condition $(ii)$ is numerically ill-posed, since a slight perturbation may completely
change the topology of the curve nearby a singular point, see Example~\ref{ex:per} for instance.
On the other hand, in many applications, such as solving bi-parametric polynomial systems,
condition $(ii)$ is unnecessary.
Let us illuminate this point now.
For a given bi-parametric polynomial system, one can compute a border curve~\cite{Chen2016,Chen2016b},
or a border polynomial~\cite{YX05} or discriminant variety~\cite{LazardRouillier2007} in general
in the parametric space, where the complement of the curve is a disjoint union of connected open cells,
such that above each cell the number of solutions of the system is constant and
the solutions are continuous functions of parameters with disjoint graphs.
Let $B$ be a border curve and $\tilde{B}$ be a polygonal approximation $\epsilon$-close to it.
In~\cite{Chen2016}, we introduced the notion of $\epsilon$-connectedness and showed that
two points are $\epsilon$-connected w.r.t. $\tilde{B}$ implies that they are connected w.r.t. $B$,
which in turn implies that the parametric system has the same number of solutions at the two points.
Thus an $\epsilon$-approximation of the border curve meeting only condition $(i)$ is good enough
for the purpose of solving parametric systems.
The curve tracing subroutine in~\cite{Chen2016} relies on perturbation to handle singularities.
In this work, we develop a perturbation free algorithm.
The algorithm traces only the curve outside neighborhoods of singular points and replace each neighborhood simply by a point.
Such produced approximation is more stable for defining
the numerical connectedness of the complement of the curve
than those approximations preserving the topology around singular points.

The paper is organized as follows.
In Section~\ref{sec:theory}, we formalize the problem and provide a theoretical base algorithm
to guarantee $\epsilon$-closeness based on a robust curve tracing method.
Several strategies for improving the numerical stability of tracing is proposed in Section~\ref{sec:optimization},
such as tracing the curve away from the singular points rather than towards it,
tracing the curve by a  try-and-resume strategy,
classifying singular points into natural clusters~\cite{Bennett2016},
and handling ``pseudo singular points''.
The theoretical algorithm may require the step size to be very small.
In Section~\ref{sec:algo},  we present a more practical algorithm based on
optimizations in Section~\ref{sec:optimization}.
Instead of preventing curve jumping, it maintains a simple data structure
to detect curve jumping.
The effectiveness of the algorithm is  demonstrated through several nontrivial examples in Section~\ref{sec:exp}.
Finally, in Section~\ref{sec:con}, we draw the conclusion and point out some possible future directions
to improve the current work.
Compared with our CASC 2018 paper~\cite{ChenW18}, our algorithm is now fully generalized to plot curve
in $n$-dimensional ambient space. In particular, its effectiveness is demonstrated
though visualizing space curves in $3$-D space in Section~\ref{sec:exp}.
Another new contribution of the paper is that we propose effective heuristic strategies
to handle ``pseudo singular points'' in Section~\ref{sec:optimization}.
Last but not least, we provide a proof of Theorem~\ref{Theorem:robust}.

\section{A theoretical base algorithm}
\label{sec:theory}
It is highly nontrivial for continuation methods to guarantee that the polygonal chains
are $\epsilon$-close to the curve even when the curve contains no singular points.
Robust tracing without curve jumping must be involved.
In the literature, there are several techniques~\cite{Blum1997,Beltran2013,Martin2013} that can solve this.
Here we combine the technique of $\alpha$-theory~\cite{Blum1997,Beltran2013} with a technique developed in~\cite{WRF2017},
which has been used to estimate the error of numerically computed border curve in~\cite{Chen2016}.
In particular, we have Theorem~\ref{Theorem:robust},
which  provides a way to obtain $\epsilon$-approximation of a regular section of a curve.

\subsection{Preliminary}
We first recall some well-known results
which are important for the rest of the paper.

Throughout this paper, let $F=\{f_1,\ldots,f_{n-1}\}\subseteq \R[x_1,\ldots,x_n]$,
$B\subset \R^n$ be a bounded box and $\epsilon\in \R$ be a given precision.
Let $\Jac_{F}$ be the Jacobian of $F$, or simply $\Jac$ if no confusion arises.
We assume that $\dim{V_{\C}(F)}=\dim{V_{\R}(F)}=1$ and at almost all points of $V_{\C}(F)$,
$\Jac$ has full rank $n-1$.
Let $M_i$, $i=1,\ldots,{n}$, be respectively the submatrix of $\Jac$ by removing the $i$-th column of $\Jac$.
Let $\Delta_i$, $i=1,\ldots,n$, be the determinant of $M_i$.
Then the zero set of $F\cup\{\Delta_1,\ldots,\Delta_n\}$ in $\R^n$
is the set of singular points of $V_{\R}(F)$,
which is finite under our assumptions.
Denote by ${\sf SingularPoints}(F, B)$ an operation which computes
the set of {\bf singular points} of $V_{\R}(F)$ inside $B$.

Let $W$ be a closed smooth component of $V_{\R}(F)$.
Let $L:{a}{x}=d$ be a hyperplane in $\R^n$.
Let ${x}^*$ be any point of $W$.
The distance from $x^*$ to $L$ is given by $\displaystyle \delta({x}^*)=\frac{|{a}{x}^*-d|}{\|{a}\|_2}$.
By the method of Lagrange multipliers, the local extrema of the constrained optimization problem
${\rm min}~\delta({x})$ subject to $F({x})=0$ can be obtained by solving the system
$G := F\cup \{\Jac^t\lambda=a\}$, where $\lambda=(\lambda_1,\ldots,\lambda_n)$ are introduced
new variables.
Let $(x^0,\lambda^0)$ be a point of $V_{\R}(G)$ such that  $\Jac_F$ has full rank at $x^0$,
then by Proposition $2.2$ of~\cite{WR13}, for almost all choices of $a$ in $\R^n$,
$(x^0,\lambda^0)$ is an isolated point of $V_{\R}(G)$.
Therefore, we can apply homotopy continuation method to find such isolated points.
Denote by ${\sf WitnessPoints}(F, B)$ an operation which finds all such $x^0\in B$,
denoted by $S$.
If $x^0\in W$,  $\Jac_F$ has full rank at $x^0$,
thus $x^0\in S$.
In other words, $S$ contains at least one point (called {\bf witness point}) from each closed smooth component $W$ of $V_{\R}(F)$.
Note that it is possible that $S$ may also contain witness points of non-closed components of $V_{\R}(F)$.

Let $\Jac(p)=U\Sigma V^T$ be the singular value decomposition (SVD) of $\Jac$ at a regular point $p$ of $V_{\R}(F)$,
where $U$ and $V$ are orthogonal matrices of dimension $(n-1)\times (n-1)$ and $n\times n$
and $\Sigma={\rm diag}(\sigma_1,\ldots,\sigma_{n-1})\in\R^{(n-1)\times n}$, where $\sigma_1>\cdots>\sigma_{n-1}>0$.
The Eckhart-Young Theorem says that  $\displaystyle\sigma_{i}=\underset{{\rm rank}(B)=i-1}{{\rm min}}\|\Jac-B\|_2$, $i=1,\ldots,n-1$.
Therefore, the smallest singular value $\sigma_{n-1}$ measures the 2-norm distance of $\Jac$
to the set of all rank-deficient matrices.
In other words, it tells how $\Jac$ is close to be singular.

Let $V=[v_1\mid\dots\mid v_n]$, where each $v_i$ is a column vector.
By $\Jac(p)=U\Sigma V^T$, we have $\Jac(p) v_n=0$.
That is $v_n$ is a tangent vector which can be chosen as
the direction to trace the curve.
We define an operator ${\frak t}$ which computes $v_n$ from $\Jac(p)$,
that is $v_n={\frak t}(\Jac(p))$.
More precisely, to trace the curve, we adopt a {\bf predictor-corrector} method.
In the predictor step, we choose a step size $h>0$ and compute the predictor point
$q=p+hv_n$. In the corrector step, we apply Newton iterations 
$\displaystyle q_{i+1}=q_i-\bigg( \begin{smallmatrix} {\Jac(q_i)}  \\ {{\frak t}(\Jac(q_i))^T} \end{smallmatrix} \bigr)^{-1} \bigl( \begin{smallmatrix}  {F(q_i)} \\ 0 \end{smallmatrix} \bigg) $,
$i\geq 0$, with $q_0=q$, until a stopping criterion is met.

\begin{definition}
    Let $X$ and $Y$ be two non-empty subsets of a metric space $(M,d)$.
    The {\em Hausdorff distance} $d_H(X,Y)$ is defined by
    $$
    d_H(X,Y)={\rm max}\{ {\rm sup}_{x\in X}{\rm inf}_{y\in Y}d(x,y),  {\rm sup}_{y\in Y}{\rm inf}_{x\in X}d(x,y) \}.
    $$
    Given $F=\{f_1,\ldots,f_{n-1}\}\subseteq \R[x_1,\ldots,x_n]$ with $\dim{V_{\R}(F)}=1$, a finite box $B\subset \R^n$ and a given precision $\epsilon\in R$.
    A  set ${\cal S}$ of polygonal chains contained in $B$ is called an $\epsilon$-approximation of $V_{\R}(F)$ if $d_{H}(\cup_{P\in {\cal S}} P, V_{\R}(F)\cap B)\leq \epsilon$ holds.
  \end{definition}

\subsection{Robust tracing of regular curve}

  Let $D$ be the unit disk centered at the origin.
  Let $B$ be a bounding box of $\R^n$.
  W.l.o.g, we assume that $B\subset D$ and that $K(F)={\rm max}(\{\lVert \nabla \Jac_{ij}(z) \rVert)\mid z\in D\})\leq 1$ holds,
  which can always be achieved by shifting and rescaling.

  Let $\tilde{{z}_0}$ be an approximate point of $V_{\R}(F)$,
  such that there exists a $\tau$ to make the intersection
  of $\lVert z-\tilde{z_0}\rVert\leq\tau$ and $V_{\R}(F)$ have only
  one connected component
\footnote{This component is a subset of a connected component $C$ of $V_{\R}(F)$
and the point $\tilde{z_0}$ belongs to the Voronoi cell of $C$.} 
and the line in the gradient direction of $F$ at $\tilde{z_0}$
  have only one intersection point $z_0$ with the component, see Fig.~\ref{fig:branch}.
  We call $z_0$ the {\em associated exact point} of $\tilde{z_0}$ on $V_{\R}(F)$.

 \begin{figure}
  \centering
  \includegraphics[width=0.6\textwidth]{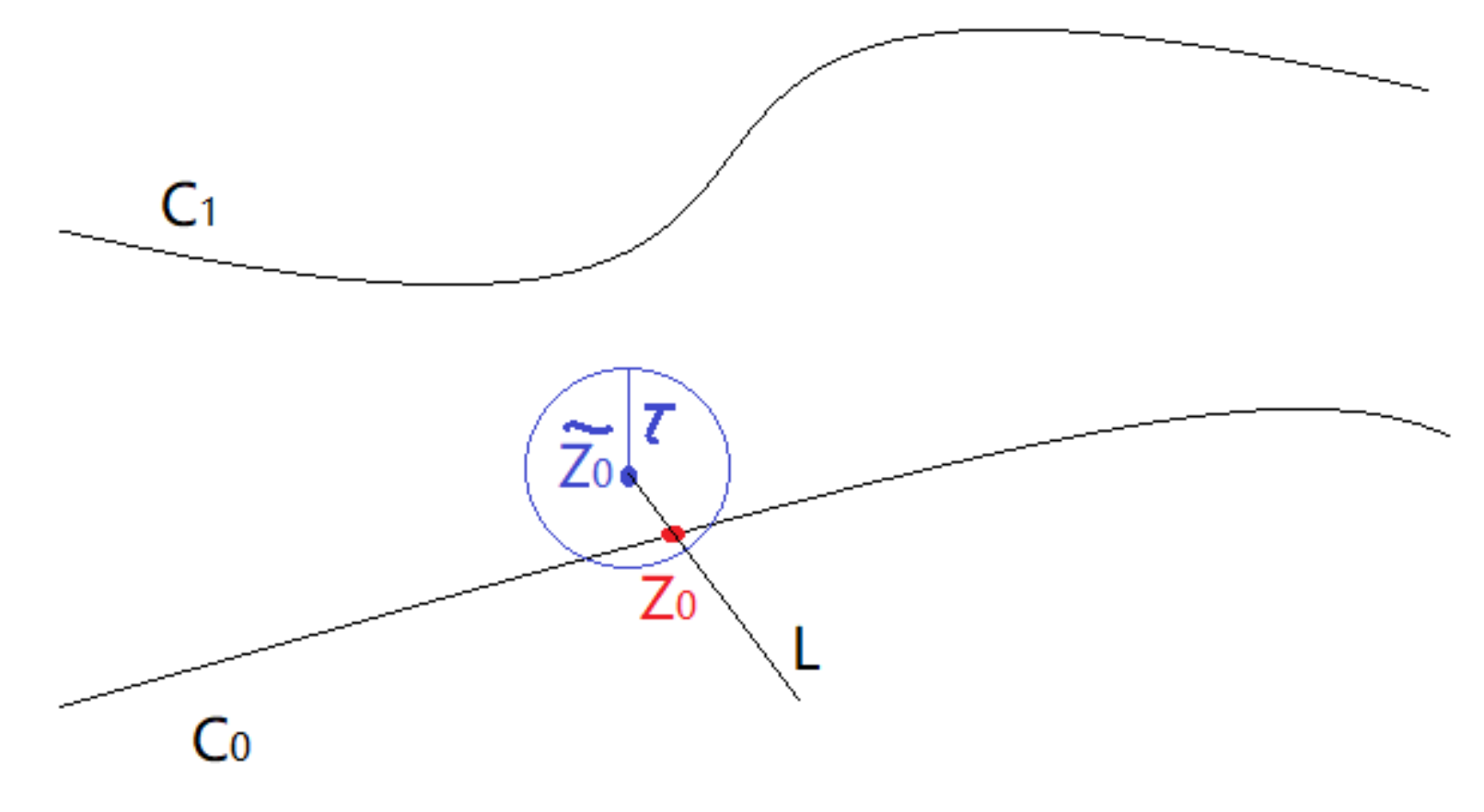}
  \caption{The associated exact point of an approximate point of the curve.}
  \label{fig:branch}
 \end{figure}

Next, we recall a result proved in \cite{WRF2017}  (Theorem 3.9).  

\begin{lemma}
  \label{Lemma:jump}
  Let $\sigma_0$ be the smallest singular value of $\Jac_F(z_0)$.
  Let $\rho\geq 1$ and $\omega(\rho):=\sqrt {2 \left( 2\,\rho-1 \right)  \left( 2\,\rho - 2\,\sqrt {\rho\, \left( \rho-1 \right) } -1 \right) }$.
  Note that $\omega(\rho)$ is a monotone decreasing function with limit $1$ as $\rho\rightarrow+\infty$.
 Assume that $2\rho>3\omega$ holds (which is true for any $\rho\geq 1.6$).
 Let $\mu=\sqrt{n(n-1)}$ and $\displaystyle s_0=\frac{\sigma_0}{2\mu\rho_0}$, $\rho_0\geq 1.6$.
 
As illustrated in Fig.~\ref{fig:error}, let $L_0$ be a hyperplane which is perpendicular to the tangent line  
of $V_{\R}(F)$ at $z_0$ of distance $s_0$ to $z_0$.
Let $z_1$ be a zero of $V_{\R}(F)\cap L_0$.
Then $z_1$ is on the same component with $z_0$ if and only if 
\begin{equation}
\label{eq:moveDist}
\|{z}_1 - {z}_0\|_2 < \omega(\rho_0) \cdot  s_0.
\end{equation}
\end{lemma}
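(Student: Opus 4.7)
The plan is to deduce the lemma from the quantitative uniqueness part of Smale's $\gamma$-theory, applied to the square augmented system obtained by slicing $V_{\R}(F)$ with the transverse hyperplane $L_0$. Concretely, set
$\tilde F(z) := (F(z),\, a^T(z - z_0 - s_0 v_n))$,
where $v_n = {\frak t}(\Jac_F(z_0))$ is the unit tangent at $z_0$ and $a$ is a unit normal of $L_0$. The zeros of $\tilde F$ lying near $z_0$ are exactly the points of $V_{\R}(F)\cap L_0$ near $z_0$, so the task reduces to showing that the ball $B(z_0,\omega(\rho_0)\,s_0)$ contains precisely one such zero, namely the one lying on the branch through $z_0$.

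First I would control the conditioning of the augmented Jacobian. Because $v_n$ spans $\ker \Jac_F(z_0)$, appending the row $v_n^T$ to $\Jac_F(z_0)$ produces a matrix whose nonzero singular values are those of $\Jac_F(z_0)$ together with an extra unit singular value; in particular the smallest singular value of $\Jac_{\tilde F}(z_0)$ is still $\sigma_0$. Next I would bound Smale's invariant $\gamma(\tilde F, z_0)$ from above. Since the last row of $\tilde F$ is affine, only the second-order derivatives of $F$ contribute, and the hypothesis $K(F)\leq 1$ on $D$ combined with the standard tensor-to-operator estimate $\|D^2 F(z)\|\leq \sqrt{n(n-1)}\cdot K(F)=\mu$ yields an upper bound of the form $\gamma(\tilde F, z_0)\leq \mu/(2\sigma_0)$, hence $1/\gamma(\tilde F, z_0) \geq 2\sigma_0/\mu = 4\rho_0 s_0$.

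With these ingredients in place the main geometric step is the two-radii picture implicit in $\omega(\rho)$: the $\gamma$-theorem provides an \emph{inclusion} radius around $z_0$ inside which Newton's method converges to a unique zero of $\tilde F$ on the branch of $z_0$, and an \emph{exclusion} radius outside which no other zero of $\tilde F$ can lie. The specific function $\omega(\rho)=\sqrt{2(2\rho-1)(2\rho-2\sqrt{\rho(\rho-1)}-1)}$ is exactly the best constant for which, under the compatibility condition $2\rho>3\omega(\rho)$, a single ball of radius $\omega(\rho)s_0$ can simultaneously play both roles. The biconditional of the lemma then follows: if $z_1$ lies on the branch of $z_0$, it is the unique branch crossing caught by the corrector started at $z_0+s_0 v_n$ and therefore sits in this ball; conversely any zero of $\tilde F$ inside the ball must coincide with that unique branch crossing and so cannot belong to a different connected component.

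The hardest part will be the bookkeeping needed to match the two radii exactly, and in particular verifying that the condition $2\rho > 3\omega(\rho)$ (which forces $\rho_0 \geq 1.6$) is precisely what makes the inclusion and exclusion radii compatible with the single value $\omega(\rho_0)s_0$. A secondary technical point is the lift from univariate to multivariate $\gamma$: one must invoke Dedieu's operator-norm variant and check that augmenting $F$ by a linear form does not inflate $\gamma$, which is immediate from the block-triangular structure of $D\tilde F$ produced in the conditioning step above.
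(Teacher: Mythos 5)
First, a point of reference: the paper does not prove Lemma~\ref{Lemma:jump} itself --- it quotes it as Theorem~3.9 of \cite{WRF2017}, whose argument is a Kantorovich/curvature-type analysis driven by the Lipschitz bound on the Jacobian, not Smale's $\gamma$-theory. Your reduction to the square sliced system $\tilde F$ and the observation that appending the row $v_n^T$ leaves the smallest singular value of the augmented Jacobian equal to $\min(\sigma_0,1)$ are both correct. The first genuine gap is the claimed bound $\gamma(\tilde F,z_0)\leq \mu/(2\sigma_0)$: Smale's $\gamma$ is a supremum over \emph{all} derivative orders $k\geq 2$, whereas the standing hypothesis $K(F)\leq 1$ controls only the second derivatives (equivalently, it makes $\Jac_F$ Lipschitz with constant $\mu$, via Lemma~1 of \cite{Chen2016}). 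The assertion that ``only the second-order derivatives of $F$ contribute'' is false unless $F$ is quadratic, so $\gamma$ cannot be bounded from the stated hypotheses; with only a Lipschitz Jacobian available, the correct tool is a Kantorovich-type theorem, and its radii are not the ones you need.

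The second and more serious gap is that $\omega(\rho_0)$ is not a $\gamma$-theorem inclusion/exclusion radius, so the radius-matching you defer as ``bookkeeping'' is in fact the entire content of the lemma and will not close along your route. Writing $u=2\rho_0-1$, one checks the identity $\omega(\rho_0)^2=2u\bigl(u-\sqrt{u^2-1}\bigr)=1+\bigl(u-\sqrt{u^2-1}\bigr)^2$, so that $\omega(\rho_0)s_0=\sqrt{s_0^2+\delta^2}$ with $\delta=s_0\bigl(u-\sqrt{u^2-1}\bigr)=R-\sqrt{R^2-s_0^2}$ for $R=(2\rho_0-1)s_0$; that is, $\delta$ is exactly the sagitta, at abscissa $s_0$, of a circular arc of radius $R$ tangent to the tangent line at $z_0$. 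This is the signature of the actual proof: on the ball of radius $\sigma_0/\mu=2\rho_0 s_0$ where Weyl's inequality keeps $\Jac_F$ full-rank, the curvature of the branch through $z_0$ is bounded in terms of $\mu$ and the (degrading) smallest singular value, the branch is trapped between two such osculating circles, and the condition $2\rho>3\omega$ separates this lens from any other component meeting $L_0$ --- which yields both directions of the equivalence. The universal constants of the $\gamma$-theorem are all of the form $c/\gamma$ for absolute constants $c$ and have no mechanism to produce this $\rho_0$-dependent sagitta, so the inclusion and exclusion radii cannot be made to coincide with $\omega(\rho_0)s_0$ as you hope. You need the Lipschitz/curvature argument, not $\gamma$-theory.
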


  Now, let $L_1$ be a hyperplane which is perpendicular to the tangent line  
 of $V_{\R}(F)$ at $z_1$ of distance $s_1$ to $z_1$ such that $z_0$
 is a zero of $V_{\R}(F)\cap L_1$.
 Let $\sigma_1$ be the smallest singular value of $\Jac_F(z_1)$.
 If there exists a $\rho_1\geq 1.6$ such that $\displaystyle s_1=\frac{\sigma_1}{2\mu\rho_1}$,
 by Lemma~\ref{Lemma:jump}, $\|{z}_1 - {z}_0\|_2 < \omega(\rho_1) \cdot  s_1$ holds, 
 as $z_0$ is on the same component with $z_1$.
 If such $\rho_1$ does not exist, we can always increase the value of $\rho_0$,
 thus reduce the step size $s_0$ such that the condition $\rho_1\geq 1.6$ is satisfied.
 Indeed, by Weyl's theorem~\cite{Stewart90}, $|\sigma_1-\sigma_0|\leq \| \Jac(z_1)-\Jac(z_0)  \|_2$.
 By Lemma 1 of \cite{Chen2016}, $\| \Jac(z_1)-\Jac(z_0)  \|_2\leq \mu\|{z}_1 - {z}_0\|_2$.
 Thus  $|\sigma_1-\sigma_0|\leq \mu\|{z}_1 - {z}_0\|_2$.
 As $z_1$ approaches $z_0$, reducing $s_1$ implies increasing $\rho_1$.

 To sum up, one can find $\rho_i\geq 1.6$ such that $\displaystyle s_i=\frac{\sigma_i}{2\mu\rho_i}$
 and $\|{z}_1 - {z}_0\|_2 < \omega(\rho_i) \cdot  s_i$, for $i=0,1$.
 Let $\rho^*={\rm min}(\rho_0, \rho_1)\geq 1.6$.
 Let $h=\|{z}_1 - {z}_0\|_2$.
We define a cone with $z_0$ as the apex, the tangent line at $z_0$ as the axis, 
and the angle deviating from the axis being $\displaystyle \theta := \arccos\bigg(\frac{1}{\omega(\rho^*)}\bigg)$. 
By above analysis, the curve from $z_0$ to $z_1$ must be in this cone when the step size is small. 
Similarly, we can construct another cone with $z_1$ as the apex, 
the tangent line at $z_1$ as the axis, 
and the angle deviating from the axis being $\displaystyle \theta := \arccos\bigg(\frac{1}{\omega(\rho^*)}\bigg)$, 
such that it contains the curve from $z_1$ back to $z_0$.
Figure~\ref{fig:error} illustrates the two cones and the curve contained in them.


\begin{figure}
  \begin{center}
    \includegraphics[width=0.7\textwidth]{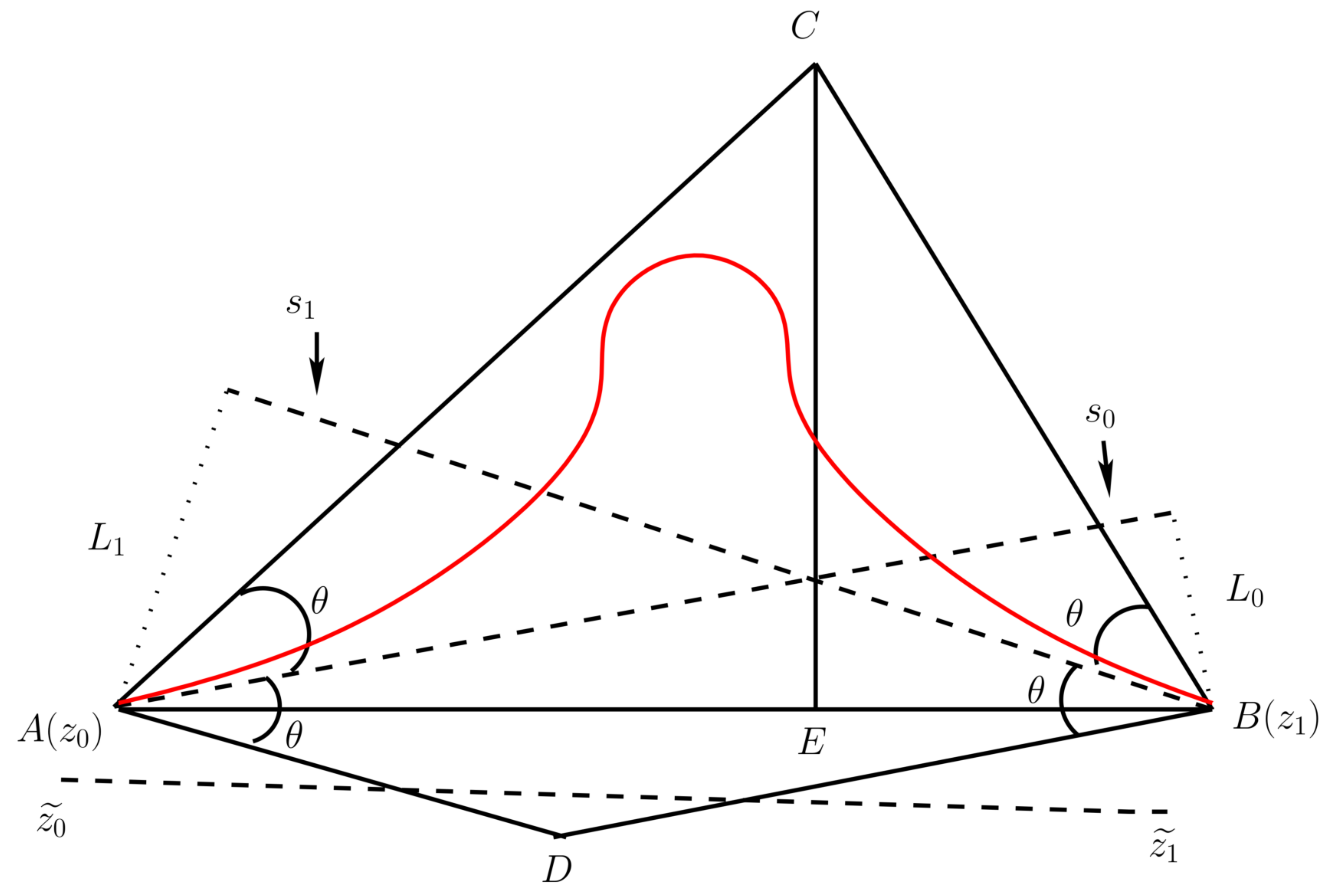}
\end{center}
\caption{A 2D image illustrating the intersection of two cones.}
\label{fig:error}
\end{figure}

From Figure~\ref{fig:error}, 
we know that $|CE|<|AE|\tan(2\theta)$ and $|CE|<|EB|\tan(2\theta)$ hold.
Since $|AE|+|EB|=h$, we deduce that $\displaystyle |CE|<\frac{h}{2}\tan(2\theta)=\frac{h}{2}\tan\bigg(2\arccos\bigg(1/\omega(\rho_*)\bigg)\bigg)$.

To summarize, we have the following lemma.
\begin{lemma}
  \label{Lemma:dist}
  Use notations in the above analysis,
  let ${\cal C}_{z_0z_1}$ be the curve segment between $z_0$ and $z_1$ in $V_\R(F)$.
  Let $\sigma={\rm max}(\sigma_0,\sigma_1)$.
  The  Hausdorff distance between $C_{z_0z_1}$ and the segment $\overline{{z}_0 {z}_1}$
  is at most $\displaystyle \frac{\omega(\rho_*)\sigma}{4\mu\rho_*}\tan\bigg(2\arccos\bigg(\frac{1}{\omega(\rho_*)}\bigg)\bigg)$.
\end{lemma}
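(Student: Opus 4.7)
The plan is to turn the geometric bound on $|CE|$ that the preceding paragraph already derives into a Hausdorff distance bound between the curve arc $\mathcal{C}_{z_0z_1}$ and the line segment $\overline{z_0z_1}$. The preceding analysis shows that every point $C$ on the arc lies in both cones (apices $z_0$ and $z_1$, common half-angle $\theta=\arccos(1/\omega(\rho^*))$). Hence, writing $A=z_0$, $B=z_1$ and letting $E$ be the orthogonal projection of $C$ onto $\overline{z_0z_1}$, one has $|CE|\le|AE|\tan(2\theta)$ and $|CE|\le|EB|\tan(2\theta)$. Since $|AE|+|EB|=h:=\|z_1-z_0\|_2$, taking the minimum gives $|CE|\le(h/2)\tan(2\theta)$.

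Next I would bound $h$ by invoking Lemma~\ref{Lemma:jump} at both endpoints. It yields $h<\omega(\rho_i)\,s_i=\omega(\rho_i)\sigma_i/(2\mu\rho_i)$ for $i=0,1$. Using that $\omega$ is monotone decreasing, that $\rho^*=\min(\rho_0,\rho_1)$, and that $\sigma=\max(\sigma_0,\sigma_1)$, each of these two upper bounds is dominated by $\omega(\rho^*)\sigma/(2\mu\rho^*)$. Combining with the estimate on $|CE|$ produces
\[
\sup_{c\in\mathcal{C}_{z_0z_1}} d\bigl(c,\overline{z_0z_1}\bigr) \;\le\; \frac{\omega(\rho^*)\sigma}{4\mu\rho^*}\tan\!\bigl(2\arccos(1/\omega(\rho^*))\bigr),
\]
which is one of the two suprema in the definition of $d_H$.

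For the other supremum, I would invoke a short connectedness argument. The orthogonal projection $\pi:\mathcal{C}_{z_0z_1}\to\overline{z_0z_1}$ is continuous, the arc is connected, and both endpoints project to themselves. Hence $\pi(\mathcal{C}_{z_0z_1})$ is a connected subset of $\overline{z_0z_1}$ containing $z_0$ and $z_1$, so it is the whole segment. Therefore every $e\in\overline{z_0z_1}$ has a preimage $c$ on the arc with $d(e,c)=|CE|$ bounded as above, delivering the symmetric bound.

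The step I expect to be the most delicate is the ``$\theta\to 2\theta$'' inflation used in the cone bound on $|CE|$. The cone at $z_0$ has half-angle $\theta$ around the \emph{tangent line} at $z_0$, whereas we need a bound in terms of angles around the \emph{chord} $\overline{z_0z_1}$; since $z_1$ itself sits in the cone at $z_0$, the chord deviates from the tangent by at most $\theta$, so by the triangle inequality on angles any curve point $C$ in the same cone makes an angle at most $2\theta$ with the chord, which is what replaces $\tan\theta$ by $\tan 2\theta$. Because $n$ may exceed $2$, I would carry out this angle computation inside the affine plane spanned by $z_0$, $z_1$, $C$, where the argument is elementary planar geometry.
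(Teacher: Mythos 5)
Your proposal is correct and follows essentially the same route as the paper: the paper's proof is precisely the two-cone analysis preceding the lemma, which bounds $|CE|$ by $(h/2)\tan(2\theta)$ with $\theta=\arccos(1/\omega(\rho_*))$ and then bounds $h$ via Lemma~\ref{Lemma:jump} together with the monotonicity of $\omega$ and the choices $\rho_*=\min(\rho_0,\rho_1)$, $\sigma=\max(\sigma_0,\sigma_1)$. The only difference is that you supply explicit justifications for two points the paper reads off Figure~\ref{fig:error} --- the passage from the half-angle $\theta$ about the tangent line to the angle $2\theta$ about the chord, and the surjectivity of the orthogonal projection needed for the second supremum in $d_H$ --- both of which are sound.
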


In practice, $\rho_1$ and $z_1$ cannot be obtained exactly as before.
But one can always apply interval arithmetic to obtain an approximation $\tilde{z_1}$
such that the {\em associated exact point} of $\tilde{z_1}$, called $z_1$
satisfies Lemma~\ref{Lemma:dist}.

Assume that $\|z_1-\tilde{z_1}\|_2\leq\tau$.
As before, by Weyl's theorem~\cite{Stewart90} and Lemma 1 of \cite{Chen2016},
we have $\sigma_i\leq \mu\tau+\tilde{\sigma_i}$, $i=1,2$.
Let $\tilde{\sigma}={\rm max}(\tilde{\sigma_0},\tilde{\sigma_1})$.
Combining with Lemma~\ref{Lemma:dist}, we have the following result.

  \begin{theorem}
  \label{Theorem:robust}
  The  Hausdorff distance between $C_{z_0z_1}$ and the segment $\overline{\tilde{z}_0\tilde{z}_1}$
  is at most 
  $$
  \tan\bigg( 2\arccos\bigg(\frac{1}{\omega(\rho_*)}\bigg)  \bigg)\frac{\omega(\rho_*)}{4\mu\rho_*}(\mu\tau + \tilde{\sigma})+\tau,
  $$
  which is no greater than $\displaystyle 1.082\tau + 0.082\frac{\tilde{\sigma}}{\mu}$ since $\rho_*\geq 1.6$.
\end{theorem}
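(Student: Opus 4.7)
The plan is to combine Lemma~\ref{Lemma:dist}, which bounds the Hausdorff distance between the true curve segment $C_{z_0 z_1}$ and the straight segment $\overline{z_0 z_1}$ connecting the \emph{exact} associated points, with the perturbation error introduced by replacing $z_0, z_1$ by their approximations $\tilde{z}_0, \tilde{z}_1$. The underlying identity is the triangle inequality for the Hausdorff metric:
\[
d_H(C_{z_0z_1}, \overline{\tilde{z}_0\tilde{z}_1}) \;\le\; d_H(C_{z_0z_1}, \overline{z_0 z_1}) + d_H(\overline{z_0 z_1}, \overline{\tilde{z}_0 \tilde{z}_1}).
\]
So the work splits naturally into bounding the two terms on the right.

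First I would bound the second term. Since $\|z_i - \tilde{z}_i\|_2 \le \tau$ for $i = 0, 1$, any point on $\overline{z_0 z_1}$ can be written as $(1-t)z_0 + tz_1$ for $t\in[0,1]$, and the corresponding point $(1-t)\tilde{z}_0 + t\tilde{z}_1$ on $\overline{\tilde{z}_0 \tilde{z}_1}$ is within distance $(1-t)\tau + t\tau = \tau$ of it. This yields
\[
d_H(\overline{z_0 z_1}, \overline{\tilde{z}_0 \tilde{z}_1}) \le \tau.
\]

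Next I would bound the first term using Lemma~\ref{Lemma:dist}, which gives
\[
d_H(C_{z_0z_1}, \overline{z_0 z_1}) \le \frac{\omega(\rho_*)\sigma}{4\mu \rho_*}\tan\!\bigl(2\arccos(1/\omega(\rho_*))\bigr),
\]
where $\sigma = \max(\sigma_0, \sigma_1)$. The key step here is converting this bound on exact singular values into one involving the computable approximate values $\tilde{\sigma}_i$. By Weyl's theorem and Lemma~1 of~\cite{Chen2016}, $|\sigma_i - \tilde{\sigma}_i| \le \|\Jac(z_i) - \Jac(\tilde{z}_i)\|_2 \le \mu \|z_i - \tilde{z}_i\|_2 \le \mu\tau$, so $\sigma_i \le \tilde{\sigma}_i + \mu\tau$, and therefore $\sigma \le \tilde{\sigma} + \mu\tau$ where $\tilde{\sigma} = \max(\tilde{\sigma}_0, \tilde{\sigma}_1)$. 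Substituting yields the first stated bound.

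For the second stated bound, I would verify the numerical constant. Since $\omega(\rho)$ is monotone decreasing toward $1$ and $\rho_* \ge 1.6$, the quantity
\[
\frac{\omega(\rho_*)}{4\rho_*}\tan\!\bigl(2\arccos(1/\omega(\rho_*))\bigr)
\]
attains its maximum at $\rho_* = 1.6$ (where $\omega$ is largest and $\arccos(1/\omega)$ is largest, while $1/\rho_*$ is also largest). A direct computation at $\rho_* = 1.6$ shows this constant is below $0.082$, which simultaneously gives the coefficient of $\tilde{\sigma}/\mu$ and the contribution $0.082\tau$ to the $\tau$-term; adding the $\tau$ from the segment-to-segment error yields the coefficient $1.082$. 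The only mildly delicate point is verifying monotonicity of this expression in $\rho_*$ over $[1.6,\infty)$; since both $\omega(\rho_*)$ and $\arccos(1/\omega(\rho_*))$ are decreasing in $\rho_*$, while $\rho_*$ itself grows, the expression is indeed maximized at the left endpoint, and the bound follows.
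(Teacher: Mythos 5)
Your proposal is correct and follows essentially the same route as the paper: Lemma~\ref{Lemma:dist} for the exact-segment bound, Weyl's theorem together with the Lipschitz bound $\|\Jac(z_i)-\Jac(\tilde{z}_i)\|_2\le\mu\|z_i-\tilde{z}_i\|_2$ to get $\sigma\le\tilde{\sigma}+\mu\tau$, an extra $\tau$ for replacing $\overline{z_0z_1}$ by $\overline{\tilde{z}_0\tilde{z}_1}$, and evaluation of the constant at $\rho_*=1.6$ (where it is about $0.0818<0.082$). You in fact spell out the Hausdorff triangle inequality, the convex-combination bound $d_H(\overline{z_0z_1},\overline{\tilde{z}_0\tilde{z}_1})\le\tau$, and the monotonicity argument more explicitly than the paper does.
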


\subsection{Handling singularities}
It is a well known fact that tracing a curve near singularities is difficult,
as illustrated in Fig~\ref{fig:eye}.
\begin{figure}
\begin{minipage}[t]{0.43\linewidth}
\centering
\includegraphics[width=\textwidth]{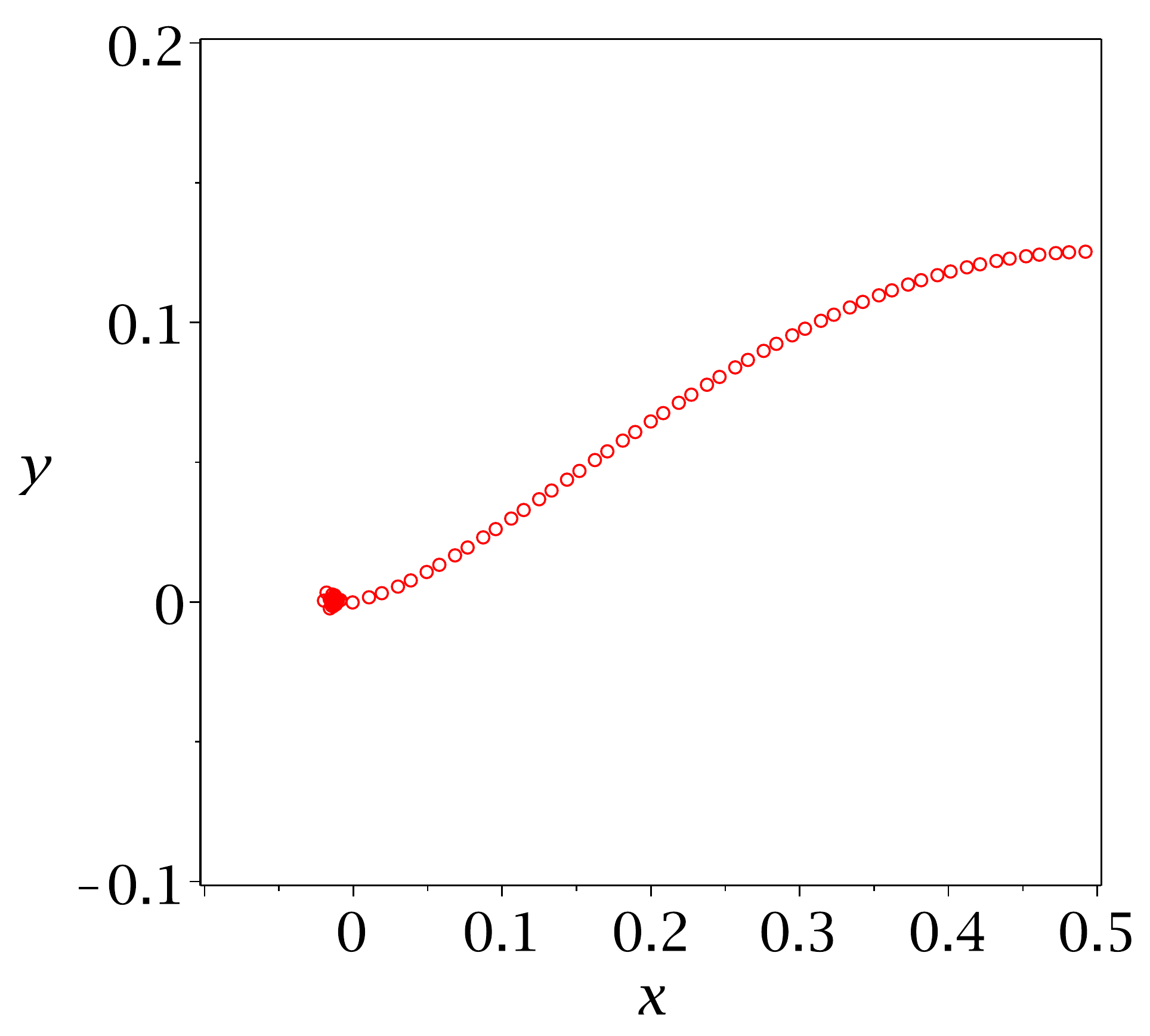}
\end{minipage}
\begin{minipage}[t]{0.45\linewidth}
\centering
\includegraphics[width=\textwidth]{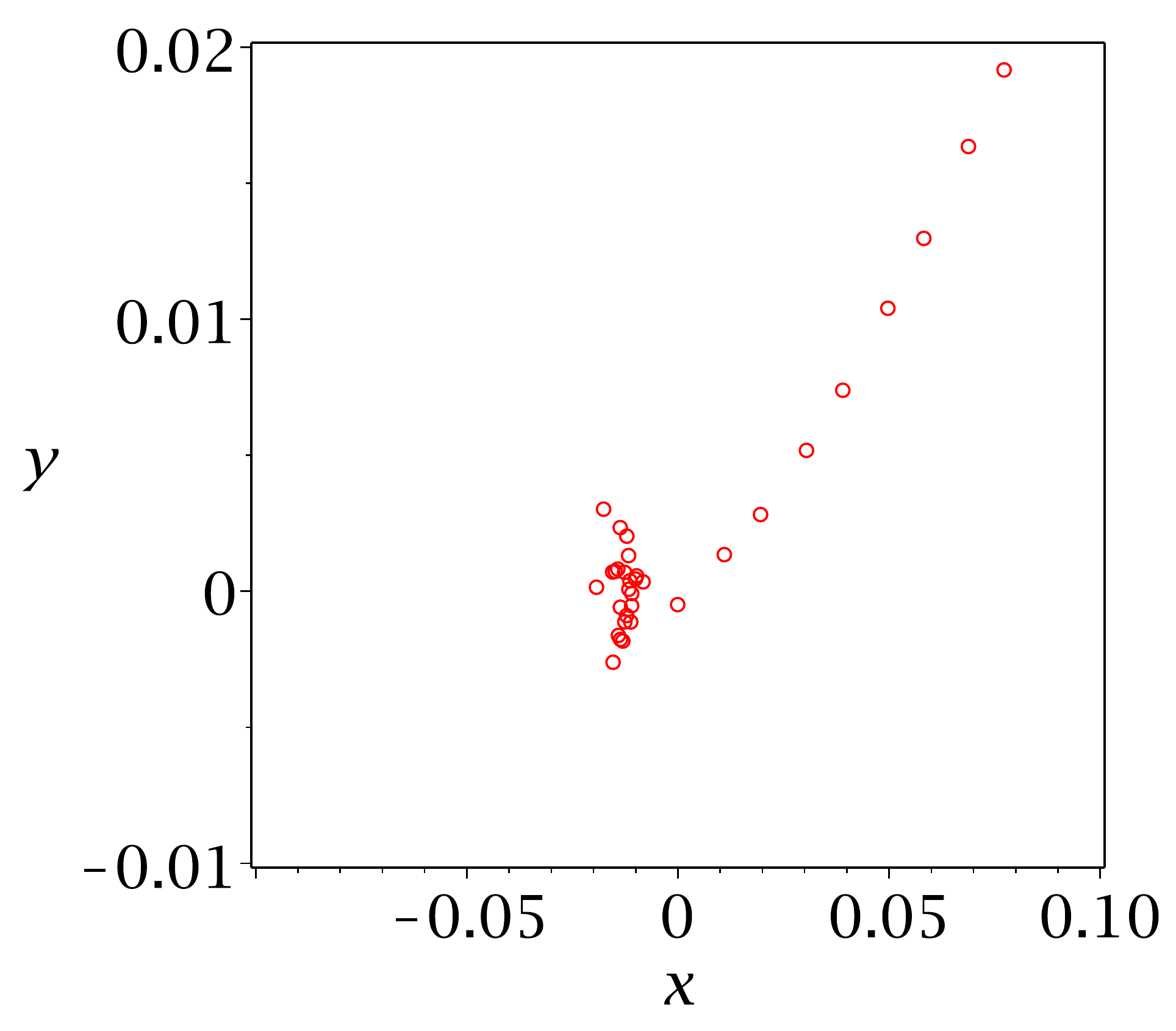}
\end{minipage}
\caption{Tracing the curve near a singular point.}
\label{fig:eye}
\end{figure}
The left subfigure illustrates tracing the zero of 
$f = {y}^{2}- \left( -{x}^{2}+x \right) ^{3}$ starting with a regular point,
where the right subfigure zooms in
the part of the left subfigure near the origin, which is a singular point.
We see that it may be difficult for curve tracing to escape out of the area near the origin,
as near the origin, Newton's method requires to solve
a linear system $Az=b$ with a very large condition number.
As a result, the errors are radically amplified.

Even worse, the topology of the curve near singularities is not numerically stable,
as illustrated by Example~\ref{ex:per}.
\begin{example}
 \label{ex:per}
  Let $f := x^2-y^2$. A slight perturbation of its coefficients
  changes completely the local topology near its singular point $(0,0)$,
  as depicted in Fig.~\ref{fig:per}.
\end{example}

\begin{figure}

  \begin{minipage}[t]{0.3\linewidth}
    \centering
    
\includegraphics[width=\textwidth]{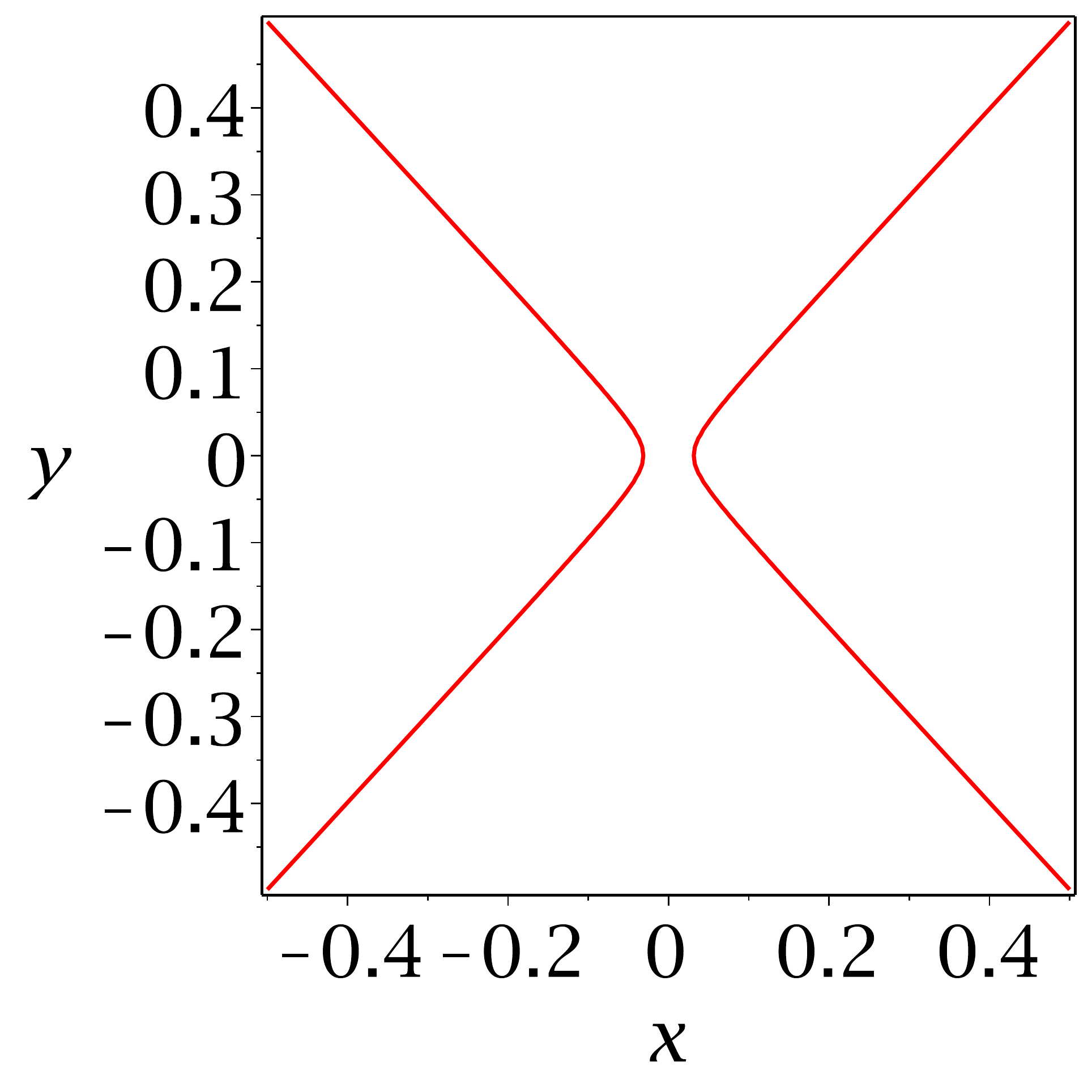}
{Plot of $f-0.001$.}
\end{minipage}
\begin{minipage}[t]{0.3\linewidth}
  \centering

\includegraphics[width=\textwidth]{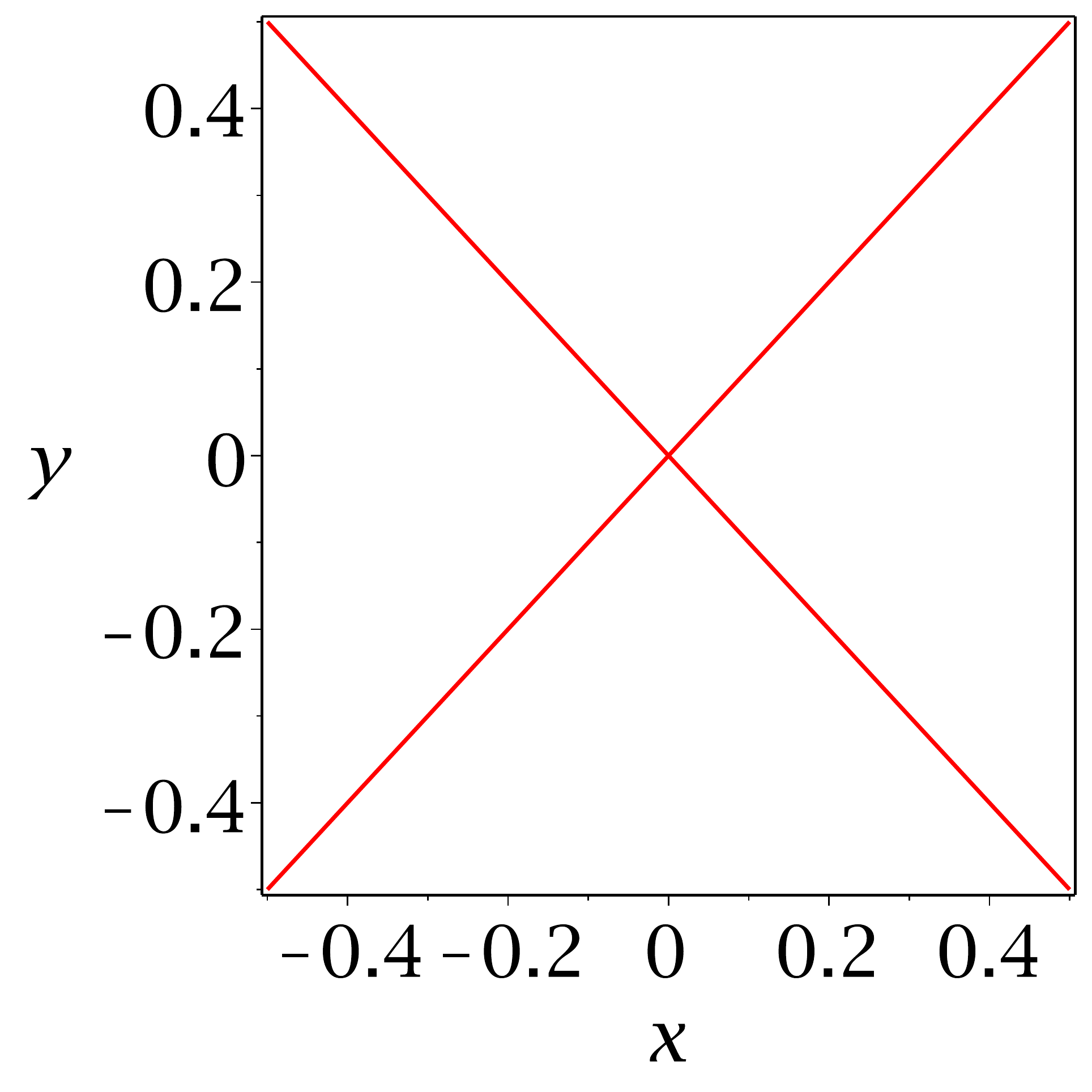}
  {Plot of $f$.}
\end{minipage}
\begin{minipage}[t]{0.3\linewidth}
  \centering

\includegraphics[width=\textwidth]{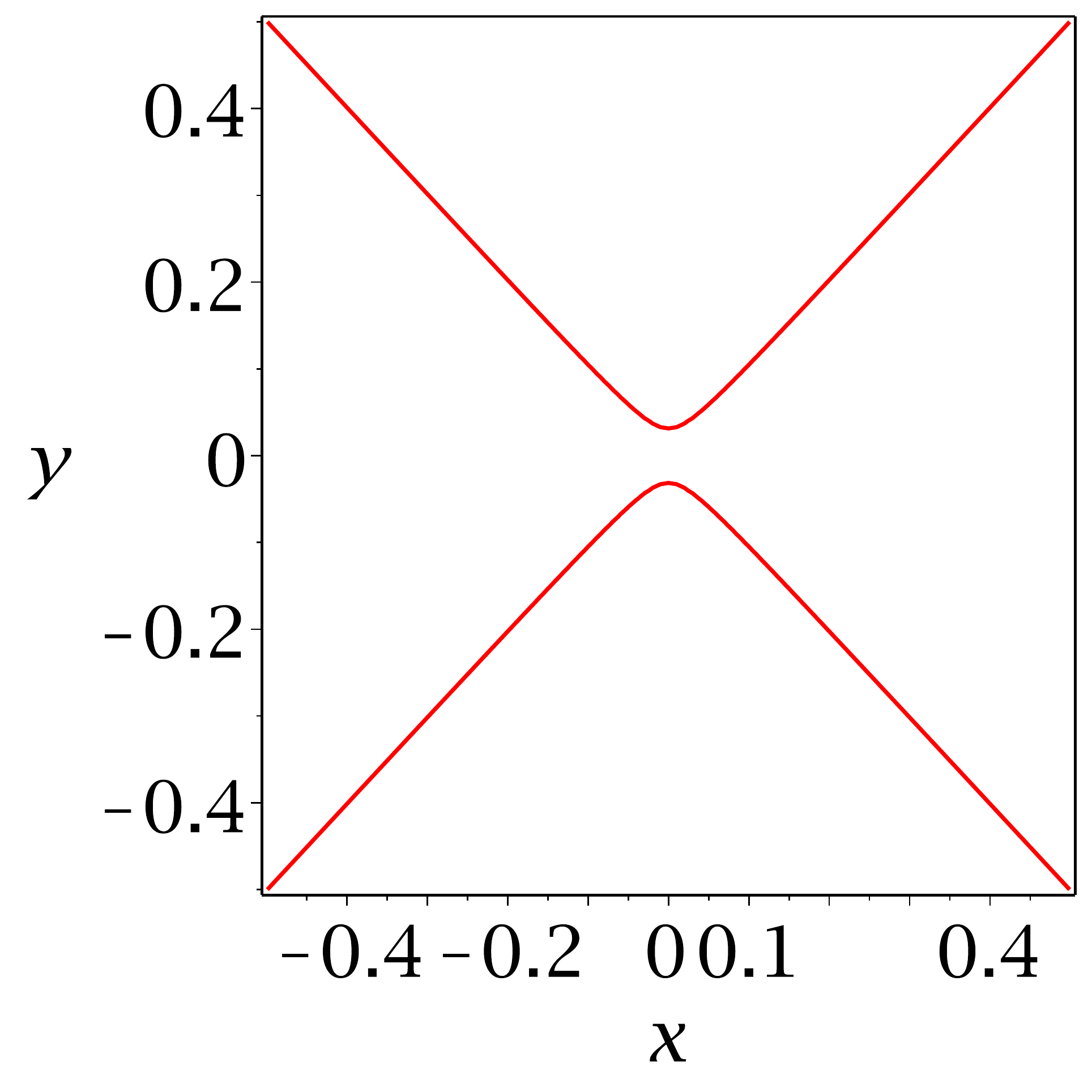}
  {Plot of $f+0.001$.}
\end{minipage}
 \caption{Plot of $f$ and its perturbations.}
\label{fig:per}
\end{figure}

So instead of tracing through a singular point,
we bypass it. 
Before presenting an algorithm, we first use a simple example to illustrate the main idea.
\begin{example}
  Consider the polynomial
  $
  f := 6\,x{y}^{7}+85\,{x}^{4}{y}^{3}-60\,{x}^{2}{y}^{5}-32\,{x}^{2}{y}^{3}+14\,{x}^{4}-35\,{y}^{4}.
  $
  Its real zero set is displayed in Fig.~\ref{fig:intro} as the red curve.
 \end{example} 
  It has three connected components inside the box $-3\leq x\leq 3, -4\leq y\leq 2$.
  The component on the top has an isolated singular point $(0,0)$, colored in green.
  To plot this component, we first draw a circle centered at the origin, which
  has four intersection points with the curve, colored in black.
  Then we trace the four branches starting with the four black points until meeting the boundary.
  Next we plot the component at the left bottom corner. To do that, we start with a blue point,
  which is an intersection point of the curve with a boundary of the box, and trace the curve until
  meeting a boundary of the box.
  At last, we plot the closed component at the right bottom corner.
  To do that, we compute critical points of the curve in $x$-direction and get two yellow points.
  Starting with any point of them, trace the curve until meeting the point itself.
  Finally we plot the singular point.
  See the right subfigure of Fig.~\ref{fig:intro} for a visualization of the approximation.
  \begin{figure}
\begin{minipage}[t]{0.43\linewidth}
\centering
\includegraphics[width=\textwidth]{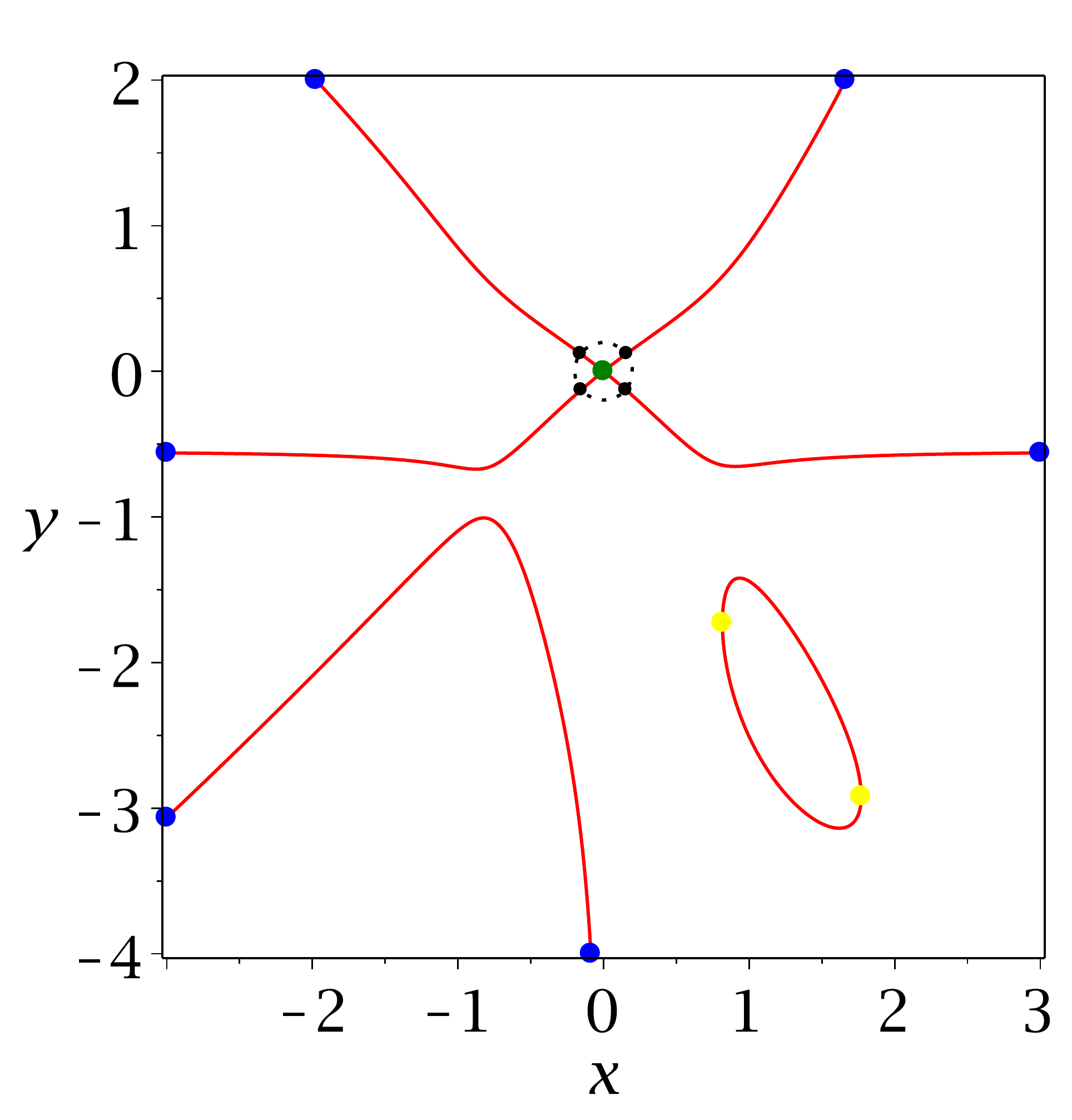}
\end{minipage}
\begin{minipage}[t]{0.45\linewidth}
\centering
\includegraphics[width=\textwidth]{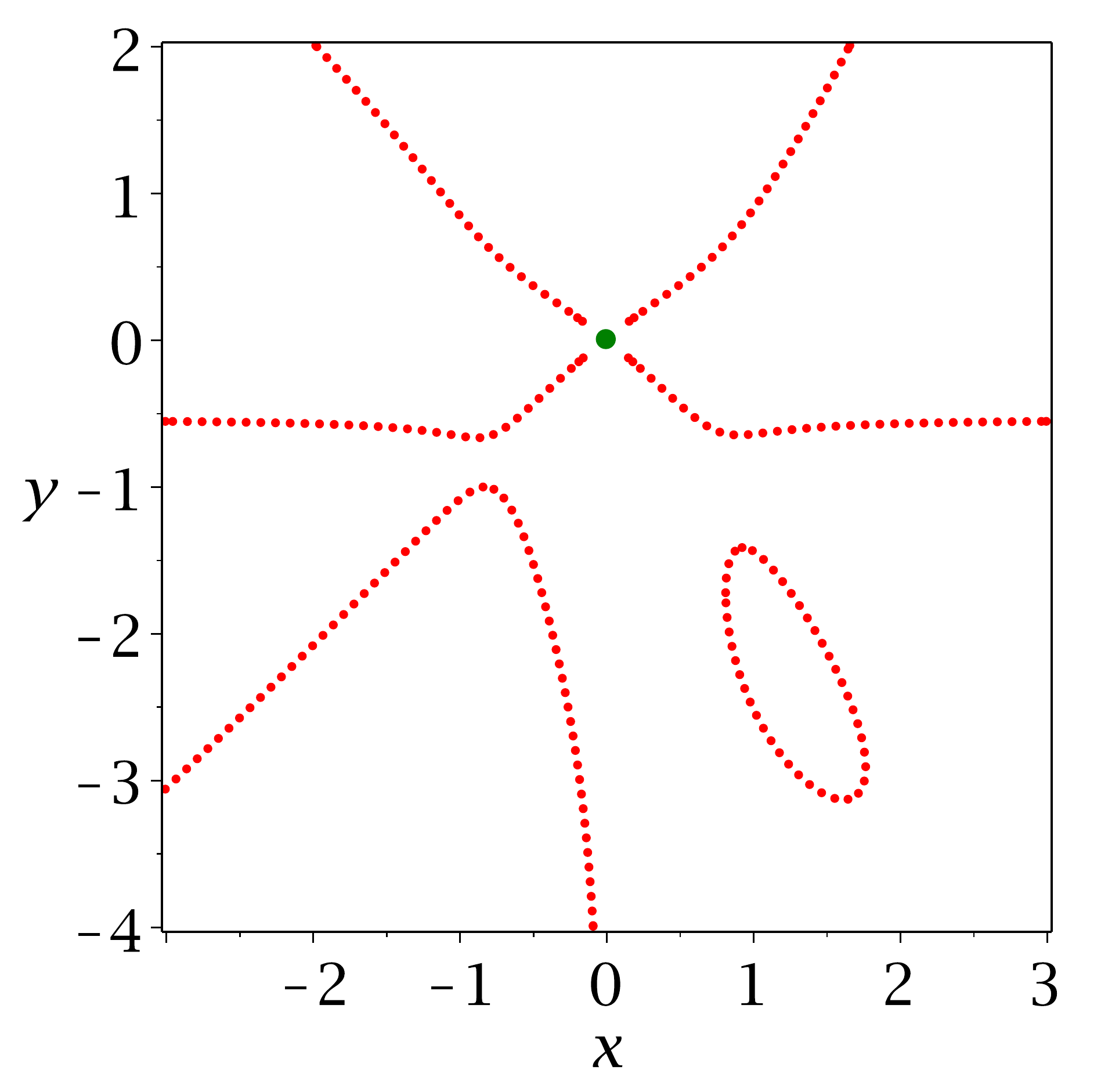}
\end{minipage}
\caption{{Left: the curve and key points. Right: an approximation of the curve ($\epsilon=0.4$).}}
\label{fig:intro}
\end{figure} 
  Note that the above procedure plots the whole curve
  inside the box except the part in a small neighborhood
  of the origin, which is simply replaced by a point.
   Such an approximation is numerically more stable than describing exactly
  the topology near the origin, as illustrated by Example~\ref{ex:per}.
  Moreover, in applications such as solving parametric polynomial systems, the curve
  is a border curve and such an approximation suffices to
  answer exactly the number of real solutions of the parametric system
  in an open cell of the complement of the curve.

  \begin{itemizeshort}
    \item Algorithm {\bf ApproxPlotBase}
\item Input: a finite set of polynomials $F=\{f_1,\ldots,f_{n-1}\}\in\Q[x_1,\ldots,x_n]$; a bounding box $B\subset \R^n$, and a given precision $\epsilon$.
  \item Output: an $\epsilon$-approximation of $V_{\R}(F)$.
  \item Assumptions:
    $(i)$ the singular points are not on the boundary of the box $B$;
    $(ii)$ the distance between two singular points is at least $\epsilon$;
  \end{itemizeshort}
\begin{enumerateshort} 
  \item Compute the singular points $S_0={\sf SingularPoints}(F, B)$.
\item Compute the intersection of the curve with spheres centered at the singular points with radius less than $\epsilon/2$
      \footnote{One could replace the spheres with axis aligned boxes inside them.}.
      Set $S_1$ to be the set of these points, called fencing points (around singular points).
      We denote by ${\sf FencingPoints}$ an operation to compute such points.
      Let $\Delta$ be the set of balls associated with these spheres.
 \item Compute the intersection of the curve with the boundaries.
      Set $S_2$ to be the set of these points.
    \item Compute the witness points of $V_{R}(F)$ (inside $B$) $S_3:={\sf WitnessPoints}(F, B)$.
  Remove from $S_3$ points that are already inside any balls in $\Delta$.
\item Starting with a point in $S_1$, trace the curve robustly based on Theorem~\ref{Theorem:robust} until meeting ($\epsilon$-close to) a point in $S_1$ or $S_2$.
      Remove the starting point and the corresponding points met in $S_1$ or $S_2$. Repeat Step $(5)$ until $S_1=\emptyset$.
      Let the resulting set of polygonal chains be $P_1$.          
    \item If $S_2\neq\emptyset$, starting with a point in $S_2$, trace the curve robustly until meeting a point in $S_2$.
          Remove the point met in $S_2$.
          Repeat Step $(6)$ until $S_2=\emptyset$.
          Let the resulting set of polygonal chains be $P_2$.          
        \item Remove points of $S_3$ which are already on the computed curve.
        \item If $S_3\neq\emptyset$, starting with a point in $S_3$, trace the curve robustly until closed curves are found.
              Remove point met during the tracing from $S_3$. Repeat Step $(8)$ until $S_3=\emptyset$.
          Let the resulting set of polygonal chains be $P_3$.
        \item Return $S_0\cup P_1\cup P_2\cup P_3$.
\end{enumerateshort}
\begin{remark}
  Assumption $(i)$ can relaxed by slightly shrinking or expanding the box.
  Assumption $(ii)$ can be relaxed by grouping the singular points into clusters.
  See Section~\ref{sec:optimization} for details.
\end{remark}

\begin{theorem}
  One can control errors of staring points and prediction-correction in the above tracing algorithm,
  such that  Algorithm {\bf ApproxPlotBase} computes an $\epsilon$-approximation of $V_{\R}(F)$.
\end{theorem}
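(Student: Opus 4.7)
The plan is to prove the statement by showing separately that the constructed set is $\epsilon$-close to $V_{\R}(F)\cap B$ in each of the two directions of the Hausdorff distance, after partitioning $V_{\R}(F)\cap B$ into the part lying inside the balls $\Delta$ around singular points and the part lying outside. Inside the balls, the approximation is represented by the finite set $S_0$ of singular points themselves. Outside the balls, the approximation is the union $P_1\cup P_2\cup P_3$ of polygonal chains produced by tracing.

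First I would handle the \emph{inside-the-balls} part. Since each ball in $\Delta$ has radius strictly less than $\epsilon/2$, every point of $V_{\R}(F)$ that lies inside a ball is within distance $\epsilon/2 < \epsilon$ of the corresponding singular point in $S_0$, and conversely each singular point of $S_0$ is on the curve, so contributes distance $0$ to the supremum in the other direction. This takes care of both sup/inf terms for that portion provided the balls are disjoint, which is assumption $(ii)$.

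Next I would address the \emph{outside-the-balls} part, which splits naturally into two sub-tasks: coverage and segment accuracy. For coverage, I must argue that every connected arc of $V_{\R}(F)\cap(B\setminus\bigcup\Delta)$ is visited by one of the traces in Steps $(5)$--$(8)$. An arc either touches the fence sphere of some singular point (handled by starting points in $S_1$), or it meets the boundary of $B$ (handled by $S_2$), or it is a closed smooth component entirely inside $B$ and away from singularities (handled by the witness points $S_3$, by the construction of \textsf{WitnessPoints}). Thus every piece of $V_{\R}(F)\cap(B\setminus\bigcup\Delta)$ is the image of some traced chain, so the sup over curve points of the inf over chain points can be bounded by the single-segment error.

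For segment accuracy, I invoke Theorem~\ref{Theorem:robust}: for a segment $\overline{\tilde{z}_i\tilde{z}_{i+1}}$ produced by the predictor--corrector step, the Hausdorff distance between it and the true curve piece ${\cal C}_{z_iz_{i+1}}$ is bounded by $1.082\tau+0.082\tilde{\sigma}/\mu$. By choosing the Newton tolerance $\tau$ and, whenever $\tilde{\sigma}$ is small (i.e.\ when approaching a singularity), adaptively shrinking the step size through an increase of $\rho_i$ as in the discussion following Lemma~\ref{Lemma:jump}, this bound can be driven below $\epsilon$ uniformly along each trace. Combining this with Lemma~\ref{Lemma:jump}'s no-jumping condition ensures that the traced chain stays on the same connected component it started on, so the supremum over chain points of the inf over curve points is also at most $\epsilon$.

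The main obstacle I anticipate is making the adaptive choice of $\tau$ and $\rho_i$ \emph{simultaneously} achieve (a) the no-jumping inequality of Lemma~\ref{Lemma:jump} and (b) the $\epsilon$ bound of Theorem~\ref{Theorem:robust}, on the entire traced portion---in particular as $\tilde{\sigma}_i$ degrades near the boundary spheres of $\Delta$. The argument must ensure that when a trace enters the neighborhood where the smallest singular value of $\Jac_F$ becomes small, the step size has already been reduced enough that the trace terminates at an $\epsilon$-neighbor of a fencing point in $S_1$ before any instability occurs. Given assumption $(i)$ and the fact that the fencing points are chosen at distance less than $\epsilon/2$ from singular points while $\tilde{\sigma}$ is bounded away from zero on the compact set $B\setminus\bigcup\Delta$, such a uniform choice exists, completing the proof.
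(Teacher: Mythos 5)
Your proposal is correct and follows essentially the same route as the paper's proof: coverage of every component via the three classes of starting points (fencing points, boundary points, witness points for smooth closed ovals, plus $S_0$ for the singular neighborhoods), segment accuracy and jump-avoidance via Theorem~\ref{Theorem:robust}, and the $\epsilon/2$ radius of the balls handling the part of the curve near singularities. Your explicit compactness remark (that $\tilde{\sigma}$ is bounded away from zero on $B\setminus\bigcup\Delta$, so a uniform step-size choice exists) is a detail the paper leaves implicit, but it is the same argument.
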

\begin{proof}
  We remark that to obtain an $\epsilon$-approximation of the curve, one must have one witness point from each connected component of the curve.
  If a component is a solitary point, it must be in $S_0$.
  For the other components which intersect with the boundary or have singular points, the starting points are in $S_2$ and $S_1$
  respectively. Note that although $S_3$ may not contain witness points for every connected component of $V_{\R}(F)$,
  it must contain at least one witness points for each smooth closed component of $V_{\R}(F)$.
  By the assumptions, the polynomial systems with zero sets $S_i$, $i=0,\ldots,3$ are all zero-dimensional.
  If the interval Newton method~\cite{Shen2012} converges,
  the error of solving these zero-dimensional systems and the error of Newton iterations (in the corrector step),
  as well as the distance between the curve and the polygonal chains can be controlled to be much less than $\epsilon$ by Theorem~\ref{Theorem:robust}.
  Otherwise, one can reduce the step size until the $\alpha$-theory~\cite{Blum1997,Beltran2013}  guarantees the convergence of Newton iterations.
  Moreover, by Theorem~\ref{Theorem:robust}, curve jumping can be avoided.
  Finally note that in the $\epsilon/2$-neighborhood of the singular points, 
  the distance between the curve and the polygonal chains are less than $\epsilon$.
  Thus,  an $\epsilon$-approximation of $V_{\R}(f)$ can be computed.
\end{proof}

\section{Improvements}
\label{sec:optimization}
In this section, we propose several strategies
for improving the numerical stability of the tracing
algorithm in last section.
These improvements lead to a practical algorithm
presented in next section.
\subsection{Choice of tracing direction}
This first strategy is plotting the curve in the direction
away from the singular points rather than towards the singular point.
In practice, the former can better avoid curve jumping, as illustrated
by Fig.~\ref{fig:jump}.
\begin{figure}
\centering
\includegraphics[width=0.5\textwidth]{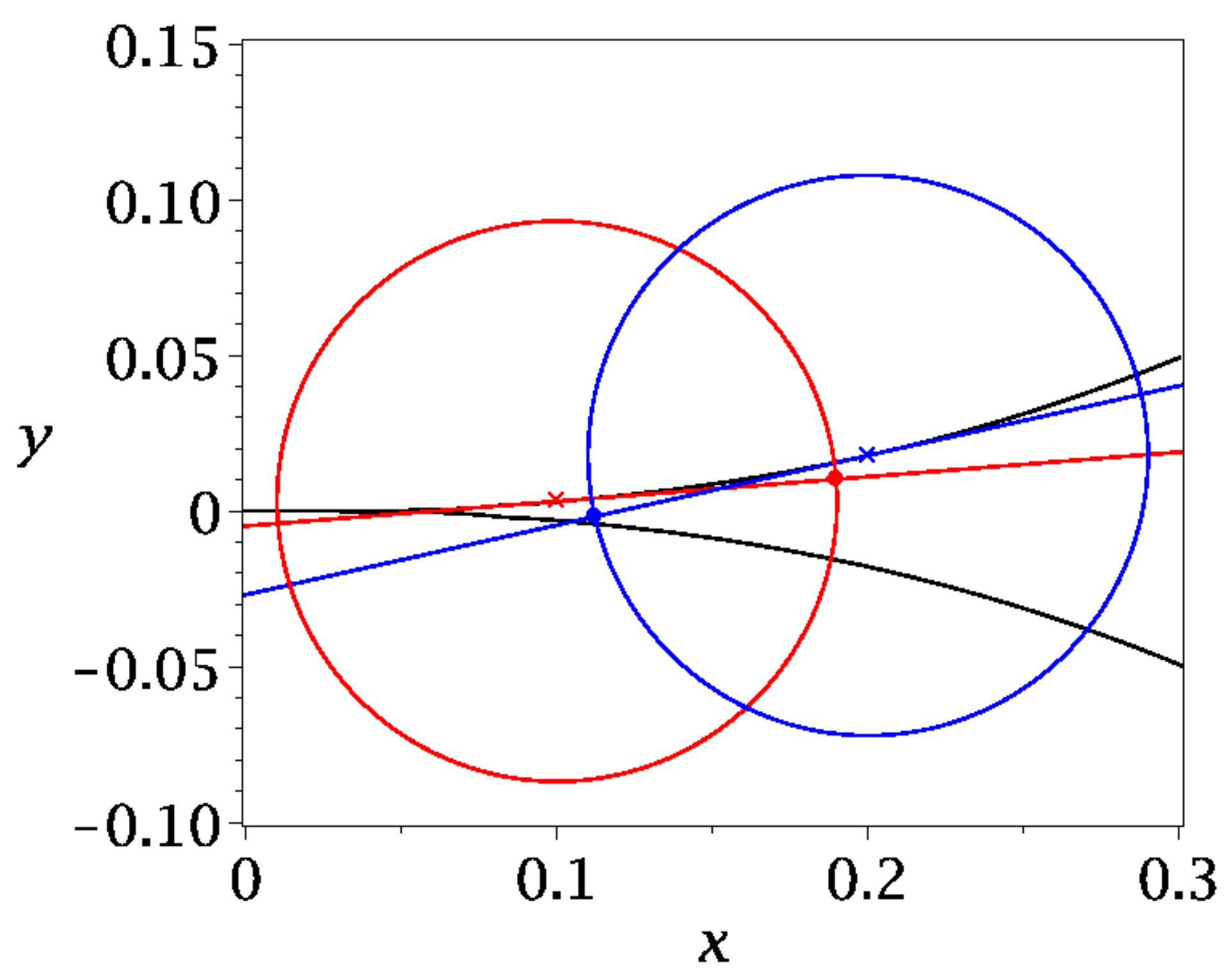}
\caption{Jump is more likely to happen when tracing towards singular points.}
\label{fig:jump}
\end{figure}
In this figure, the black curve is the locus of $f := x^5-y^2.$
To trace the upper branch, we have two possible starting points, namely the red $\times$ point, say $z_0$, and the blue $\times$ point, say $z_1$.
If we start from $z_0$ and move in the tangent direction towards $z_1$ in step size $0.09$,
we get a red $\bullet$ point close to the upper branch,
with which as an initial point, Newton iteration converges to a point still in the upper branch.
However, if we start from $z_1$ and move in the tangent direction towards $z_0$ in step size $0.09$,
we get a blue $\bullet$ point close to the lower branch.
As a result, Newton iteration converges to a point in the lower branch.

This justifies why we first start with fencing points around singular points instead of the boundary points to trace the curve,
as shown by Line $20$ of Algorithm~\ref{Algo:Plot}.
However, this first strategy does not consider the situation that there are two singular points in the same component,
for which  a try-stop-resume strategy is needed, as illustrated by examples in next subsection.

\subsection{A try-stop-resume tracing strategy}
\begin{example}
  Consider again the polynomial $f := {y}^{2}- \left( -{x}^{2}+x \right) ^{3}.$
  It is a closed curve with two singular points $(0,0)$ and $(1,0)$.
\end{example}
In Fig.~\ref{fig:try}, the algorithm first plots the red points starting from two fencing points
near $(0, 0)$
and stops when the singular values drop (at the two $\times$ points, which
are called front points).
See also line $37$ of Algorithm~\ref{Algo:main} for an implementation.
It then starts from the two fencing points near $(1,0)$ and plots the blue points,
which happen to meet the front points before singular values drop.
Checking if front points are met is implemented in Algorithm~\ref{Algo:main} from line $32$ to $35$.
\begin{figure}
\centering
\includegraphics[width=0.5\textwidth]{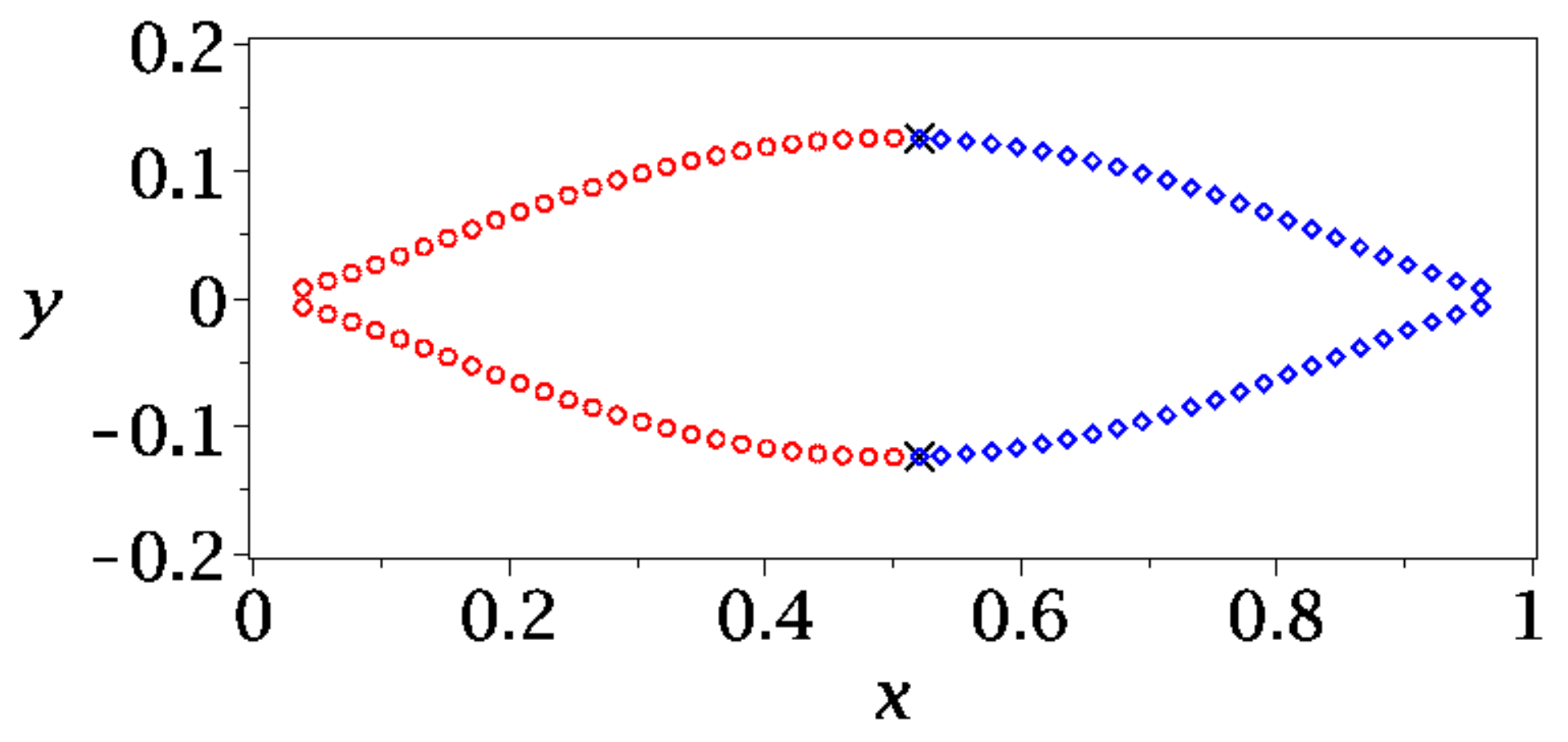}
\caption{{Try to plot the curve away from the singular points and stop when singular values drop.}}
\label{fig:try}
\end{figure}

The above example does not need the resuming step. 
Consider another one.
\begin{example}
  Consider
  {\small
  $$
  \begin{array}{rl}
    f := &-3375\,{y}^{14}-4050\,{x}^{4}{y}^{9}+108\,{y}^{13}-1215\,{x}^{8}{y}^{4}-648\,{x}^{2}{y}^{9}\\
    &+2700\,{y}^{11}+1620\,{x}^{4}{y}^{6}+1296\,{x}^{4}{y}^{5}-5400\,{x}^{2}{y}^{7}-3240\,{x}^{6}{y}^{2}\\
    &-1170\,{y}^{8}-864\,{x}^{6}y-810\,{x}^{4}{y}^{3}-720\,{x}^{2}{y}^{4}+4000\,{y}^{6}\\
    &+2400\,{x}^{4}y+540\,{y}^{5}+720\,{x}^{4}-1080\,{x}^{2}y-135\,{y}^{2}+800.
  \end{array}
  $$
  }
\end{example}
The locus of $f$ is visualized in Fig.~\ref{fig:resume}.
During the try phase, the algorithm starts with the fencing points at the bottom and
plots the red point. After all red parts have been plotted, it resumes and plots the blue parts and finally
the green parts. In this way, it avoids  directly tracing from the left singular point to the right one.
The resuming step is  implemented at line $21$ in Algorithm~\ref{Algo:Plot}, which calls
Algorithm~\ref{Algo:main} with the value of first argument $cwp$ replaced by front points ($front$).
\begin{figure}
\centering
\includegraphics[width=0.45\textwidth]{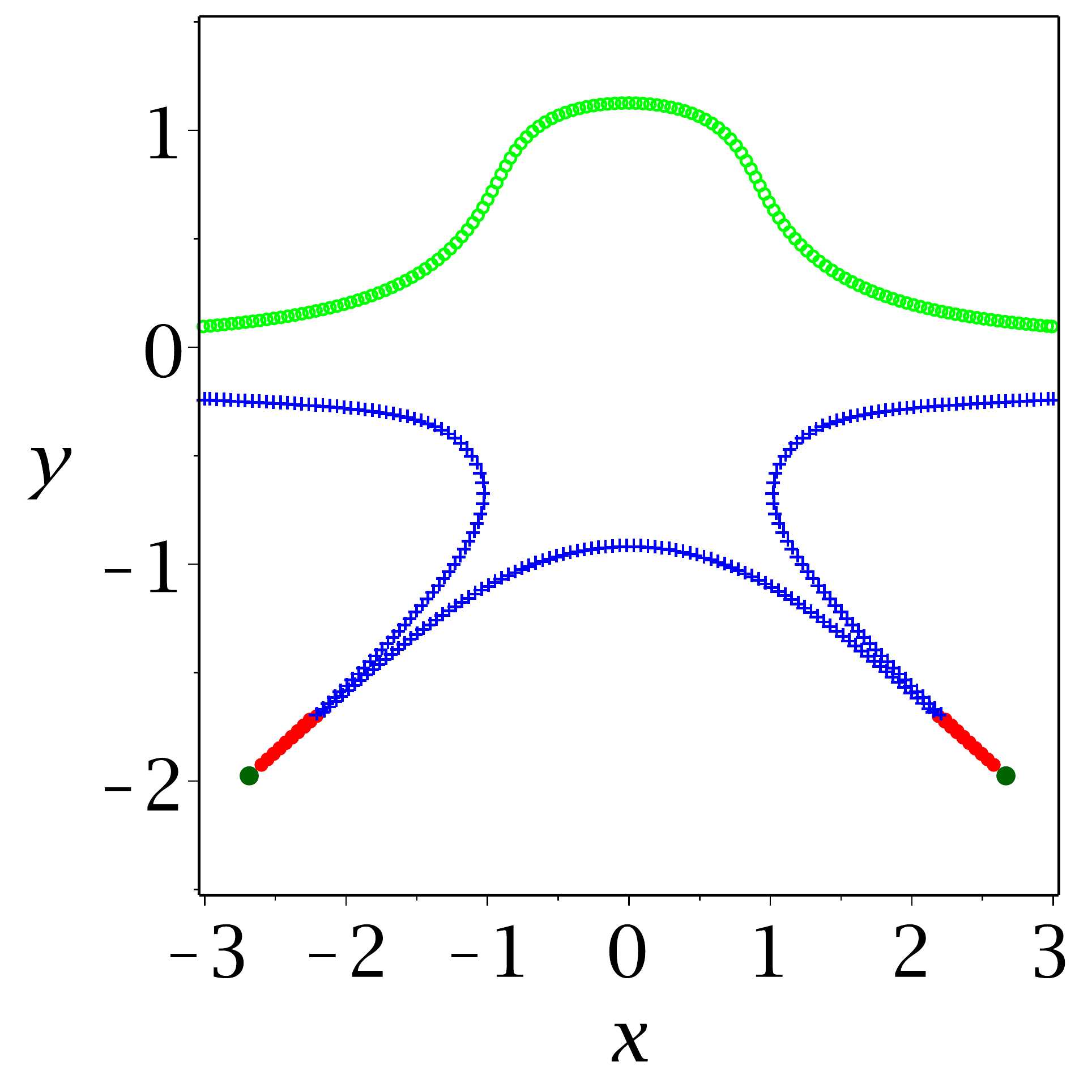}
\caption{{Try to plot the curve away from the singular points and stop when singular values drop and resume.}}
\label{fig:resume}
\end{figure}

\subsection{Handling clustered singular points}

The third improvement is to take clustered singular points into consideration.
We borrow the notion of natural cluster from~\cite{Bennett2016} on Voronoi vertices.
Given a set $S$ of singular points of $Z_{\R}(f)$ in a bounding box $B$.
For any disk $D(z,r)$ centered at $z$ of radius $r$,
let $\Delta_S(z,r)$ be the set of points in $S$ contained in $D(z,r)$.
If it is not empty, we call it a cluster of $S$.
It is called a natural cluster if $D(z,r)$ and $D(z,3r)$ contains exactly
the same set of points of $S$.
We call $D(z,r)$ an associated disk of $\Delta_S(z,r)$.
Note that the associated disks of two different natural clusters are disjoint
and the distance between their centers are at least $3r$.
For a given $S$, it is easy to generate a set of disjoint natural clusters
and their associated disks.
For instance, one can first sort the singular points in an ascending order
by the minimal distances from the point to the other points.
One can then check if the points form natural clusters of radius $r$
incrementally.
If not, we reduce the radius $r$ by half and repeat the above procedure.
Let $d$ be the minimal distances among points in $S$.
One can always obtain natural clusters of radius less than $d/3$.
Let ${\sf NatualClusters}$ be such an operation, which takes $S$ and $\epsilon/2$
as input and return a set ${\cal C}$ of natural clusters of radius $\delta$.
It is called at line $6$ of Algorithm~\ref{Algo:Plot}.
Let's consider an example.
\begin{example}
  \label{ex:discrim3}
  Let
  $
  g :=  -28\,{x}^{4}yz+58\,x{y}^{5}-65\,x{y}^{2}{z}^{3}+23\,{x}^{4}y+24\,{x}^{3}yz-64\,{x}^{2}{z}^{3}
        -32\,xy{z}^{3}-72\,x{y}^{2}z+6\,{z}^{4}+56\,xyz+1
        $
  and $f$ be the discriminant of $g$ w.r.t. $z$, which is an irreducible polynomial in $\Q[x,y]$.
  A visualization of it in the box $-1\leq x\leq 1,-1\leq y\leq 1$ is depicted in Fig.~\ref{fig:discrim3}.
  The two points $(-0.9257645305e-1, 0.7100519895)$ and $(-0.6009009066e-1, 0.7790657631)$
  on the top of Fig.~\ref{fig:discrim3} form a natural cluster of radius $0.1$.
  Note that near the left bottom corner of the box, the curve is plotted with a larger step size
  than the other parts.
  
  A second example for natural clusters is illustrated by Fig.~\ref{fig:cluster} in Section~\ref{sec:exp}.
\end{example}
  \begin{figure}
  \centering
  \includegraphics[width=0.45\textwidth]{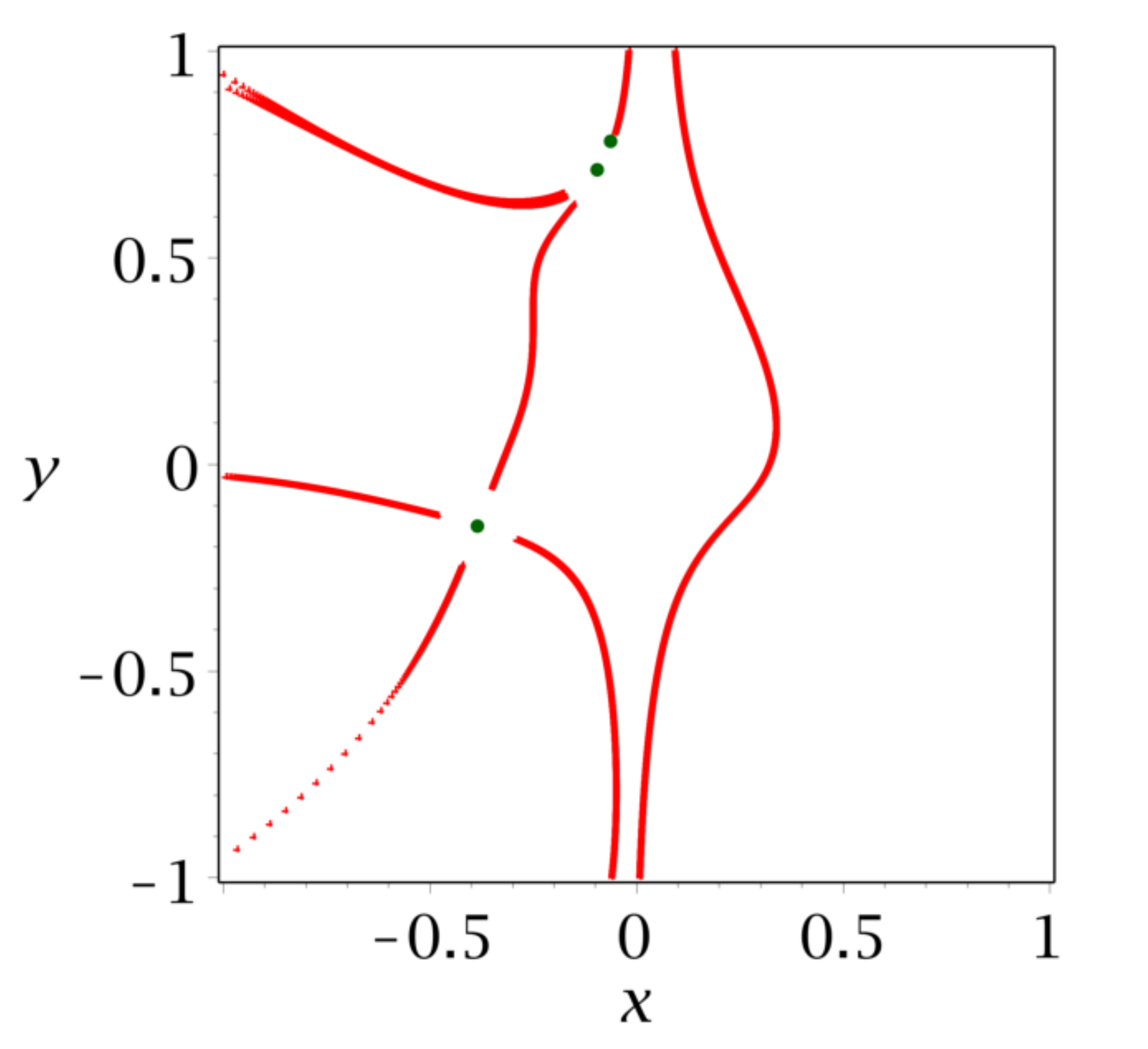}
     \caption{Plotting the curve with the help of natural clusters.}
   \label{fig:discrim3}
  \end{figure}

\subsection{Computing singular points under perturbation}
  
  The fourth improvement is to take into account the fact that sometimes the coefficients
  of input polynomials may be given approximately.
  As a result, as illustrated by an example earlier, an exact computation of singular points may be impossible.
  Consequently, we may fail to find Jacobian numerically
  singular parts of the curve if we directly compute the singular points.
  Thus, in the following, we propose a method to numerically computing singular points.

  We start by considering the case of plane curve defined by a polynomial $f\in\R[x_1,x_2]$.
  We are interested in computing an $\epsilon$-approximation of $V_{\R}(f)$.
  Suppose that we are given a $\tau\leq\epsilon$ perturbation to $f$ and let $g=f-\tau$ be the
  perturbed polynomial.
  Assume that $V_{\R}(g)$ is nonempty and the Hausdorff distance between $V_{\R}(g)$ and $V_{\R}(f)$
  is $\ll \epsilon$.
  
  According to Sard's theorem, for almost all such $\tau$,  any point of $V_{\R}(g)$
  is a regular point of the map $f$, which implies that $V_{\R}(g)$ is  a smooth manifold
  of dimension one (we assumed that $V_{\R}(g)$ is nonempty).

  Let $S :=\{q\mid q \mbox{~is a critical point of~} g\mbox{~and~} |g(q)|\leq \epsilon \}$.
  Let $p$ be a singular point of $f$.
  We have $|g(p)|=|f(p)-\tau|=|\tau|\leq\epsilon$, which implies that $p\in S$.
  On the other hand, since $d_H(V_{R}(f), V_{\R}(g))\ll\epsilon$, we have
  $d_H(p, V_{\R}(g))\ll\epsilon$.
  Thus from a given polynomial $g$, we get an approximation point $p$ of $g=0$
  ($d_H(p, V_{\R}(g))\ll\epsilon$),
  which turns to be a true singular point of its nearby polynomial $f=g+\tau$.
  We call such point a  ``pseudo singular point'' of $V_{\R}(g)$.
  Note that the set $S$ contains all such points (and possibly others).

  To summarize, if the constant coefficient of $f$ is slightly perturbed, in practice
  we can compute $S$ and use the point of $S$ to compute fencing points.
  
  In general, let  $F=\{f_1,\ldots,f_{n-1}\}$ be the set of polynomials defining the curve.
  Let $G=\{g_1,\ldots,g_{n-1}\}$ be a small perturbation of $F$.
  Finding ``pseudo singular points'' of $V_{\R}(G)$ is a difficult problem.
  Instead, we propose the following heuristic strategy to compute $S$, which works well in practice.
  
  Let $J$ be the Jacobian matrix of $G$.
  Let $J_i$, $i=1,\ldots,n$ be the submatrix of $J$ by deleting $i$-th column.
  Let $\Delta_i$ be the determinant of $J_i$.
  Then the critical points of $G:\R^{n}\rightarrow\R^{n-1}$ are the zeros of the system $\Delta := \{\Delta_1,\ldots,\Delta_n\}$,
  whose dimension in general is expected to be $n-2$.
  Let $E_i := \{g_1,\ldots,g_{i-1}, g_{i+1}, g_{n-1}\}\cup\Delta$.
  Let $S:=\cup_{i=1}^{n-1}\{q\mid q\in V_{\R}(E_i)\mbox{~and~} |g_{i}(q)|\leq \epsilon \}$.
  We denote by ${\sf PseudoSingularPoints}$ an operation to compute such $S$,
  which is called by Algorithm~\ref{Algo:Plot} at line $5$.
  It is possible that $S$ has identical points or points very close, which
  can be resolved by the natural cluster technique presented in last subsection.

\section{A practical algorithm}
\label{sec:algo}
In Section~\ref{sec:theory}, we presented a theoretical algorithm
to compute an $\epsilon$-approximation of a curve,
which may not be practical due to the small step size chosen.
In practice, one has to make a compromise between efficiency and accuracy.
Based on the improvement strategies in last section,
next we develop a more practical algorithm.
Instead of preventing curve jumping, in the algorithms below, we maintain
a simple data structure to record if a start point has been visited.
If a point is visited more than once, then there is a possible curve jumping.

  \begin{remark}
  The main features of Algorithm~{\sf ApproxPlot},
  such as tracing the curve away from the singular points,
  and grouping the singular points into natural clusters and the try-and-resume
  strategy has been explained in last section.
  Another feature of the algorithm is to detect curve jumping by counting
  the number of times that a fencing point or boundary point is visited.

  To achieve this, each fencing point, boundary point, or new front point
  generated due to the drop of singular value, is treated as an object
  with four attributes $(q, v, s, c)$, where $q$ is the point itself,
  $v$ is the tracing direction, $s$ is the singular value of $J_F(q)$
  and $c$ counts the times that $q$ is visited.
  For an object $ob$, the notation $ob.q$ means taking the value of the attribute $q$.
  Each $q$ should be visited one and only one time. If its visiting time $c>1$,
  there is a possible curve jumping at $q$.
  It is easy to check that if there is no curve jumping, after executing Algorithm~{\sf PlotMain},
  the value of any $c$ (counting visiting times of a fencing point or boundary point) can not be greater than $1$.
  Moreover, if the numerical errors are well controlled,
  after executing line $22$ of Algorithm~{\sf ApproxPlot},
  all the points in $rwp$ will only be on the closed components of the curve.
  Thus the value of any $c$ can not increase after executing Algorithm~{\sf PlotOval}.
  Finally we remark that the algorithm may not detect curve jumping errors
  caused by exchanging branches during tracing.
\end{remark}

\begin{algorithm}
\LinesNumbered
\caption{${\sf ApproxPlot}$}
\label{Algo:Plot}
\KwIn{
  A finite set of polynomials $F=\{f_1,\ldots,f_{n-1}\}\in\R[x_1,\ldots,x_n]$.
  A bounding box $B\subset \R^n$.
  A precision $\epsilon>0$.
}
\KwOut{
  An $\epsilon$-approximation of $F^{-1}(0)$ in $B$.
}
\Begin{
    \uIf{$F$'s coefficients are known exactly}{
      let $S_0={\sf SingularPoints}(F, B)$\;
    }
    \Else{
      let $S_0={\sf PseudoSingularPoints}(F, B, \epsilon)$\;
    }
    let ${\cal C}, \delta := {\sf NatualClusters}(S_0,\epsilon/2)$\;
    $cwp := \emptyset$; $bwp := \emptyset$\;
    \For{each natural cluster $C\in{\cal C}$}{
      let $p$ be the center point of $C$\;
      \For{each $q\in {\sf FencingPoints}(p,\delta)$}{
        let $s$ be the smallest singular value of ${\Jac}_F(q)$\;
        let $v := q-p$; let $c := 0$;
        add $(q, v, s, c)$ to $cwp$\;
      }
    }
    \For{each point $q$ of $F^{-1}(0)\cap \partial B$}{
      let $s$ be the smallest singular value of ${\Jac}_F(q)$\;
      let $v$ be the tangent vector of $F^{-1}(0)$ towards the interior of $B$\;
      let $c:=0$;
      add $(q, v, s, c)$ to $bwp$\;
    }
    let $\Delta$ be the union of balls associated with the natural clusters\;
    rescale the coefficients of $F$ if necessary\;
    set $rwp:={\sf WitnessPoints}(F, B)\setminus\Delta$\;
    \tcc{Note that below the function {\sf PlotMain} is called multiple times with different arguments and flags.}
    $S_1, front := {\sf PlotMain}(F, B, cwp, bwp, rwp, \delta, try)$\;
    $S_2 := {\sf PlotMain}(F, B, front, bwp, rwp, \delta, resume)$\;    
    $S_3 := {\sf PlotMain}(F, B, bwp, cwp, rwp, \delta, boundary)$\;
    $S_4 := {\sf PlotOval}(F, B, rwp, cwp\cup bwp, \delta)$\;
    \uIf{$F$'s coefficients are known exactly}{
      return $\{\cup_{i=0}^4 S_i\}$\;
    }
    \Else{
      return  $\{\cup_{i=1}^4 S_i\}$\;
    }
}
\end{algorithm}

\begin{algorithm}
    \LinesNumbered
  \caption{${\sf PlotMain}(F, B, cwp, bwp, rwp, \delta, tag)$}
  \label{Algo:main}
  \Begin{
     $S := \emptyset$; $front := \emptyset$\;
      \For{$j$ to $|cwp|$}{
          $P := \emptyset$; 
          $(q, v, s, c) := cwp[j]$\;
          {\bf if} {$cwp[j].c>0$} {\bf then} {next;} {\bf else} $cwp[i].c := 1$\;
          $mb := false$; $mc := false$;  $mf := false$\;
      
          \While{$q\in B$}{   
            
            $s' := s$;
            $q' := q$;
            $v' := v$;       
            $P := P\cup \{q\}$\;    
            choose step size $h\leq \delta/2$ according to $\delta$ and $s$\;
            $q := q+hv$;\tcp{Predictor step}
      
            with $q$ as initial point, apply Newton iterations to update $q$\;
            let $s$ be the smallest singular value of $J_F(q)$;
            let $v := {\frak t}(J_F(q))$\;       
            \lIf{$v \bullet v'<0$}{ $v := -v$ }    
            remove any element of $rwp$ on $\overline{q'q}$\;
            \For{$i$ to $|bwp|$}{
                 \If{$(bwp[i].v)\bullet v<0 \wedge bwp[i].q \in\overline{q'q}$}{
                   \uIf{$bwp[i].c>0$}{
                       report curve jump error\;
                   }
                   \Else{
                       $P := P\cup \{bwp[i].q\}$\;
                   }
                   $mb := true$;
                   $bwp[i].c := bwp[i].c + 1$;
                   break\;
                 }
            }
           \lIf{$mb$}{break}
            
            \For{$i$ to $|cwp|$}{
                 \If{$i\neq j$ and $(cwp[i].v)\bullet v<0$ and $cwp[i].q \in\overline{q'q}$}{
                   \uIf{$cwp[i].c>0$}{
                       report curve jump error\;
                   }
                   \ElseIf{tag is 'resume' or 'try'}{
                       $P := P\cup \{cwp[i].q\}$\;
                   }
                   $mc := true$;
                   $cwp[i].c := cwp[i].c + 1$;
                   break\;
                 }
            }
            \lIf{$mc$}{break}
          
            \If{tag='try'}{
                \For{$i$ to $|front|$}{
                     \If{$front[i].v\bullet v<0$ and $front[i].q \in\overline{q'q}$}{
                       $P := P\cup \{front[i].q\}$;
                       $mf := true$;
                       remove $front[i]$ from $front$\;
                       break\;
                     }
                }  
                \lIf{$mf$}{break}

                \If{$s < s'$}{
                    add $(q, v, s, 0)$ to $front$;break\;
                }
	    }
          }
          $S := S\cup \{P\}$\;
      }
      {\bf if} {tag='try'} {\bf then} {return $S$, $front$} {\bf else} {return $S$};
  }
\end{algorithm}

  \begin{algorithm}
  \LinesNumbered
  \caption{${\sf PlotOval}(F, B, rwp, wp, \delta)$}
  \label{Algo:oval}
  \Begin{
      $S := \emptyset$\;

      \While{$rwp\neq\emptyset$}{
          $P := \emptyset$\;
          choose $p\in rwp$ and set $rwp := rwp\setminus \{p\}$;
          $k := 0$;       
          $q := p$\;
          let $s$ be the smallest singular value of $J_F(q)$\;
          let $v := {\frak t}(J_F(q))$\; 
          
          $mt := false$\;
      
          \While{$q\in B$}{   

            $k := k+1$;            
            $q' := q$;
            $v' := v$;       
            $P := P\cup \{q\}$\;            
	    
            choose step size $h\leq \delta/2$ according to $\delta$ and $s$\;

            $q := q+hv$\;
      
            with $q$ as initial point, apply Newton iterations to update $q$\;

            let $s$ be the smallest singular value of $J_F(q)$\;
            let $v := {\frak t}(J_F(q))$\;   
            
            \lIf{$v\bullet v'<0$}{ $v := -v$ }
            \If{$k>2$ and $p\in \overline{q'q}$}{
               break\;
            }
            remove any element of $rwp$ other than $p$ on $\overline{q'q}$\;
            \For{$i$ to $|wp|$}{
                 \If{$(wp[i].v)\bullet v<0 \wedge wp[i].q \in\overline{q'q}$}{
                   \If{$wp[i].c>0$}{
                       report curve jump error\;
                   }
                   $mt := true$;
                   $wp[i].c := wp[i].c + 1$;
                   break\;
                 }
            }
           \lIf{$mt$}{break}

          }
          $S := S\cup \{P\}$\;
      }
      return $S$\;
  }
\end{algorithm}

\section{Experimentation}
\label{sec:exp}
In this section, we provide some nontrivial examples
to illustrate the effectiveness of our method.
Example~\ref{ex:discrim} is the discriminant of a random trivariate polynomial.
Example~\ref{ex:res} is the resultant of two random trivariate polynomials.
To make a fair comparison with the {\sf Plots:-implicitplot} command of {\sf Maple} $18$,
all polynomials are plotted using their irreducible factors.

We have implemented our algorithm in Maple.
In the algorithms of last section, there are several places where
ones needs to solve zero-dimensional polynomial systems,
namely computing singular points, computing fencing points around singular points,
computing boundary points and computing witness points.
For the first three, we find that it is more robust to
call a symbolic solver and use {\sf RootFinding:-Isolate}
of Maple. For the last one, we find it is more efficient
to use homotopy based methods and we implemented a Maple
interface to {\sf hom4ps2}~\cite{Lee2008}.
The experimental
results in this section were obtained on an Fedora laptop (Intel i7-7500U CPU @ 2.70GHz, 16.0 GB total memory).
A preliminary implementation of {\sf ApproxPlot} is available at
\url{http://www.arcnl.org/cchen/software}.

\subsection{Visualizing planar curve}

\begin{example}
  \label{ex:discrim}
  Let $f$ be the same polynomial as in Example~\ref{ex:discrim3}.
  Visualizations of $it$ by {\sf Plot:-implicitplot} in Maple and {\sf ApproxPlot} is depicted in Fig.~\ref{fig:discrim}.
  The option for {\sf Plot:-implicitplot} is \verb+grid=[300,300], gridrefine=6, crossingrefine=4+ and
  the time spent is $34.3$ seconds.
  The option for {\sf ApproxPlot} is $\epsilon=0.5$ and the time spent is $42.7$ seconds.
  For this example, the initial radius $\epsilon/2$ suffices to dividing the $4$ singular points into $3$
  natural clusters.
  The polynomial $f$ has branches very close to each other and the algorithm detects curve jumping.
\end{example}

\begin{figure}
\begin{minipage}[t]{0.45\linewidth}
\centering
\includegraphics[width=\textwidth]{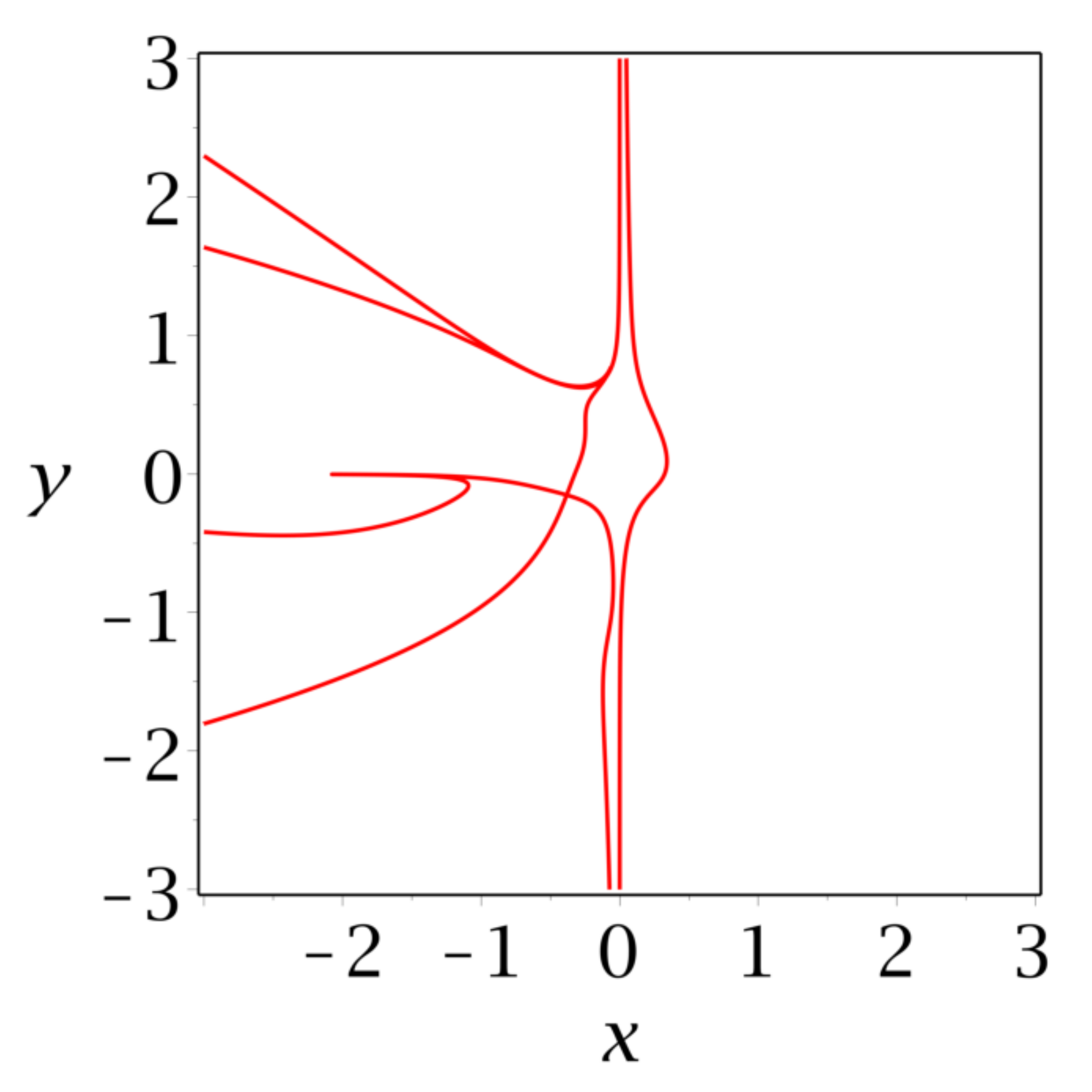}
{By Maple {\sf Plot:-implicitplot}.}
\end{minipage}
\begin{minipage}[t]{0.45\linewidth}
\centering
\includegraphics[width=\textwidth]{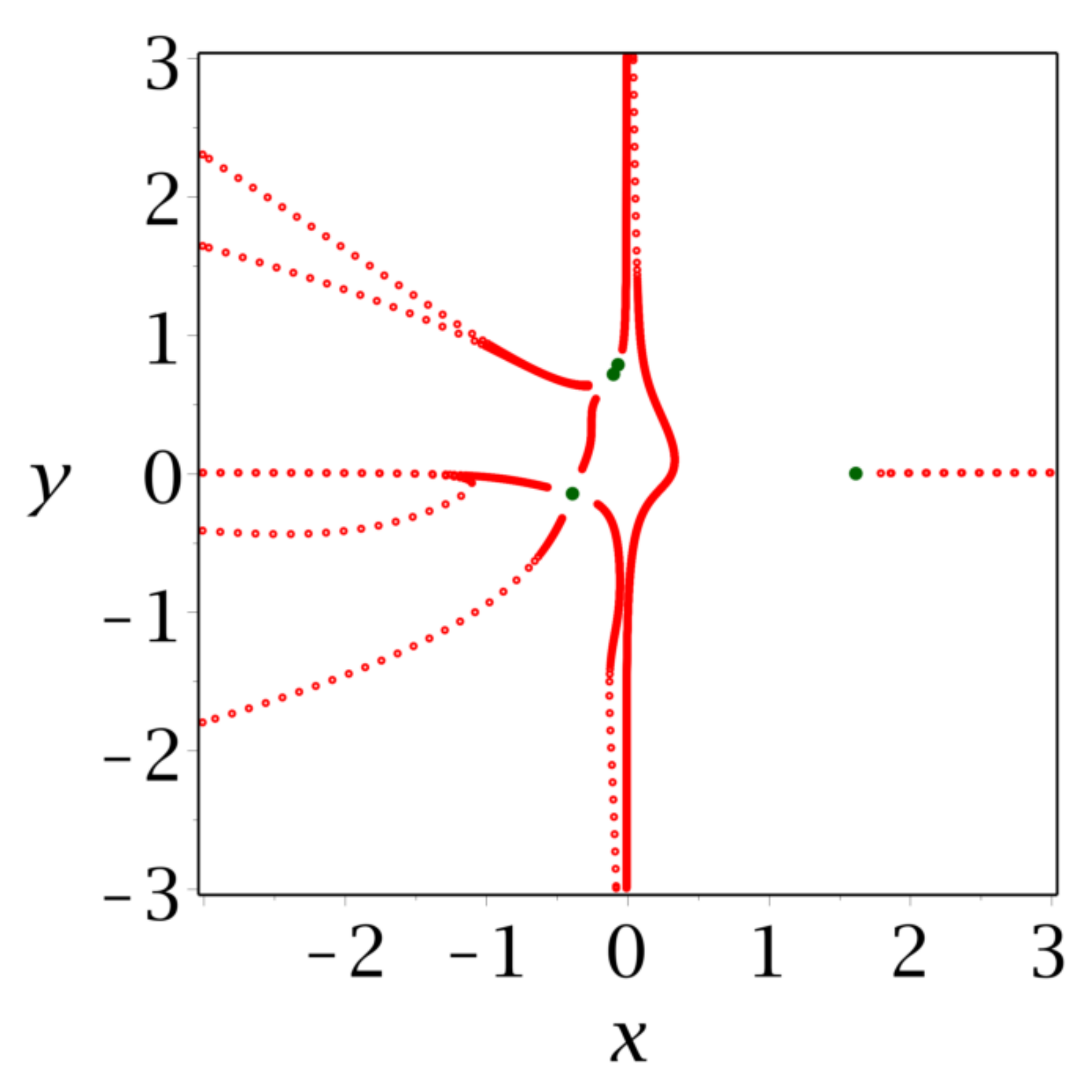}
{By {\sf ApproxPlot}.}
\end{minipage}
  \caption{Visualization of Example~\ref{ex:discrim}.}
  \label{fig:discrim}
\end{figure}

\begin{example}
  \label{ex:res}
  Let $f_1 := 72\,{y}^{2}{z}^{5}+26\,{x}^{2}y{z}^{3}-84\,{x}^{2}{y}^{2}-73\,x{z}^{2}+6,
  f_2 := -24\,{x}^{4}{z}^{2}-35\,y{z}^{3}+43\,y{z}^{2}-66\,{z}^{3}+3$.
  Let $f$ be the resultant of $f_1$ and $f_2$ w.r.t. $z$.
  A visualization of it is depicted in Fig.~\ref{fig:res}.
  The polynomial $f$ has branches very close to each other and the algorithm detects curve jumping.
  The option for {\sf Plot:-implicitplot} is \verb+grid=[300,300],gridrefine=6,crossingrefine=6+ and
  the time spent is $36.4$ seconds.
  The option for {\sf ApproxPlot} is $\epsilon=0.5$ and the time spent is $32.5$ seconds.
  For this example, the initial radius for finding clusters is $0.25$,
  which finally gets updated to $0.0625$ to find $7$ natural clusters for $10$ singular points,
  see Fig.~\ref{fig:cluster}.
\end{example}

\begin{figure}
\begin{minipage}[t]{0.45\linewidth}
\centering
\includegraphics[width=\textwidth]{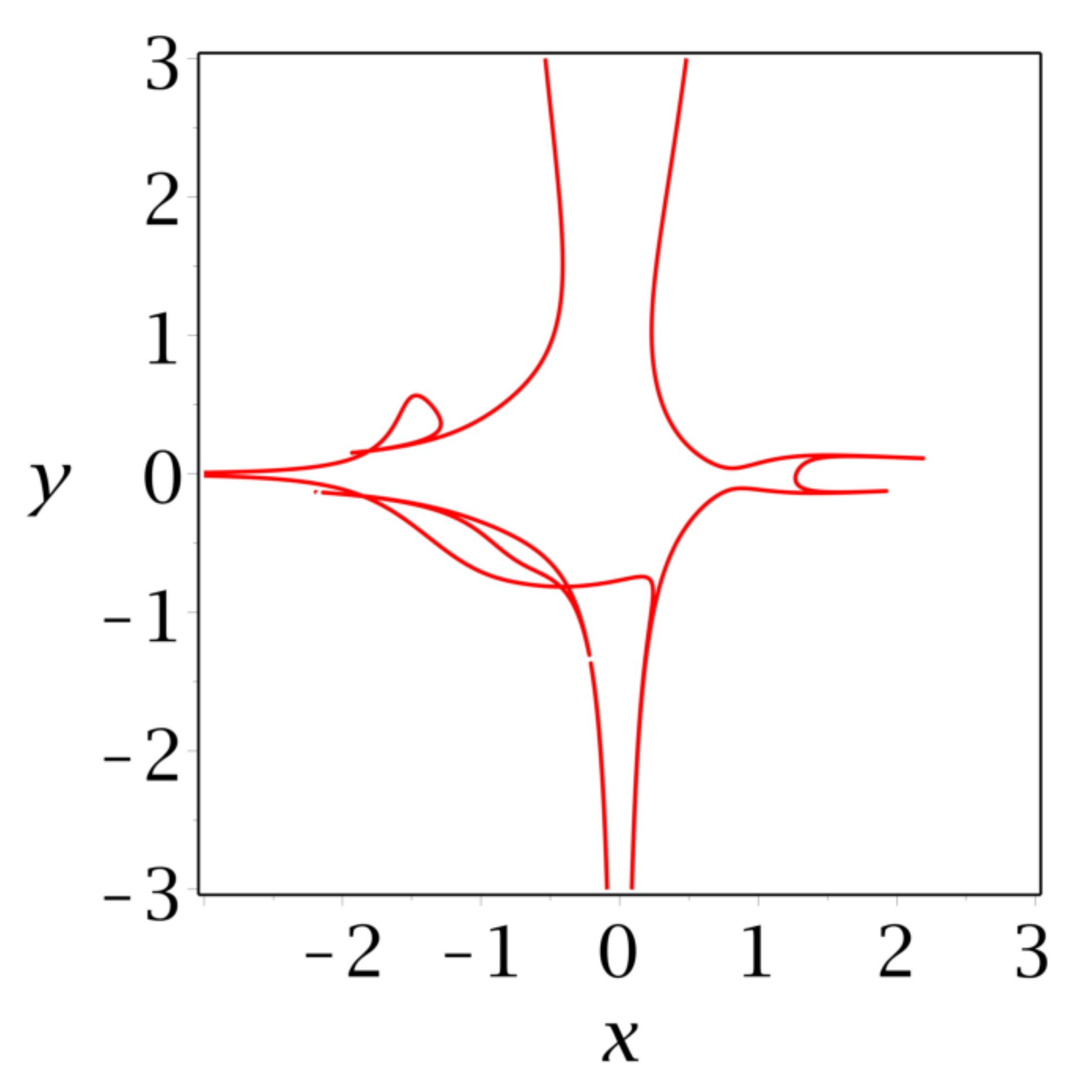}
{By Maple.}
\label{fig:plot11}
\end{minipage}
\begin{minipage}[t]{0.45\linewidth}
\centering
\includegraphics[width=\textwidth]{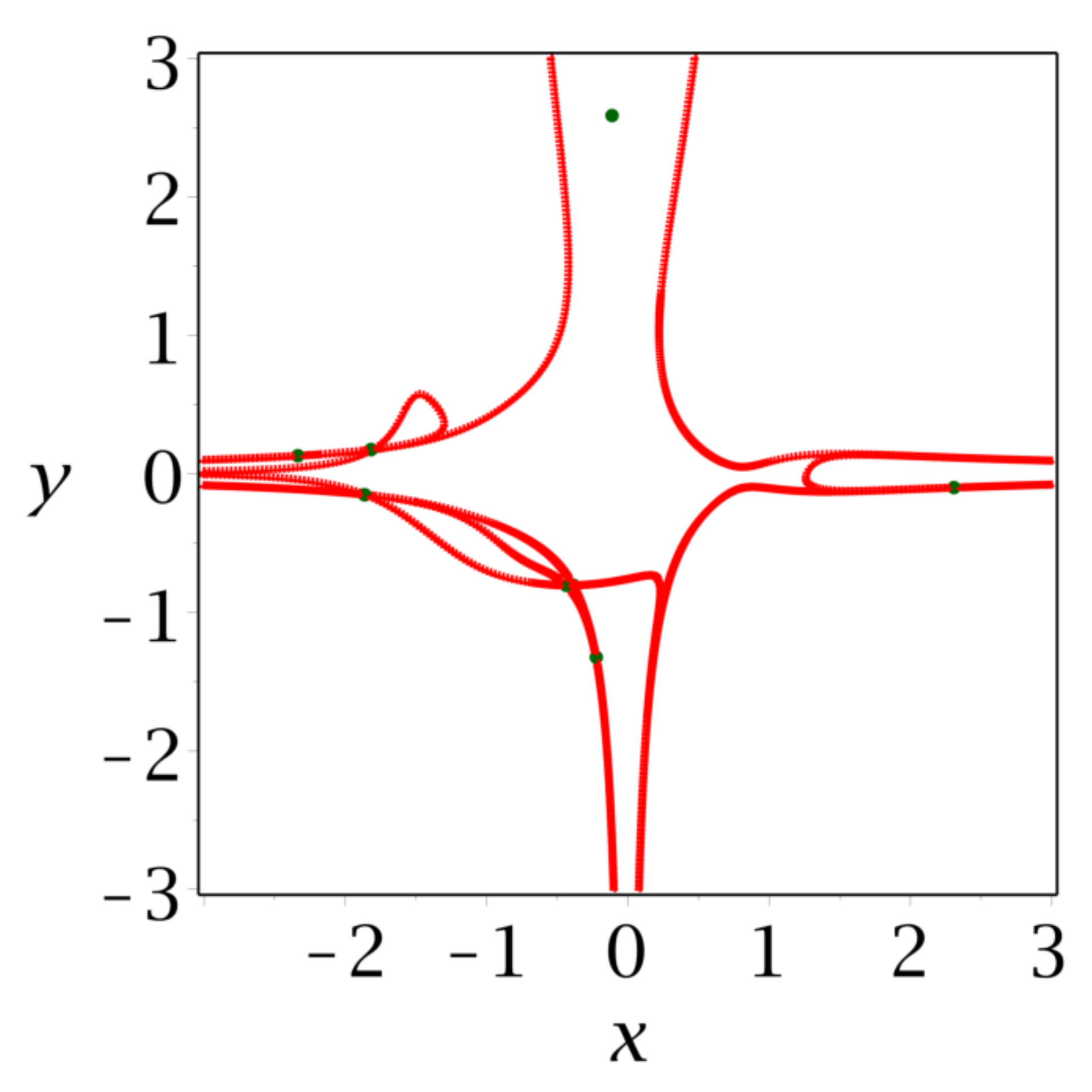}
{By {\sf ApproxPlot}.}
\label{fig:plot12}
\end{minipage}
  \caption{Visualization of Example~\ref{ex:res}.}
  \label{fig:res}
\end{figure}

\begin{figure}
\begin{minipage}[t]{0.45\linewidth}
\centering
\includegraphics[width=\textwidth]{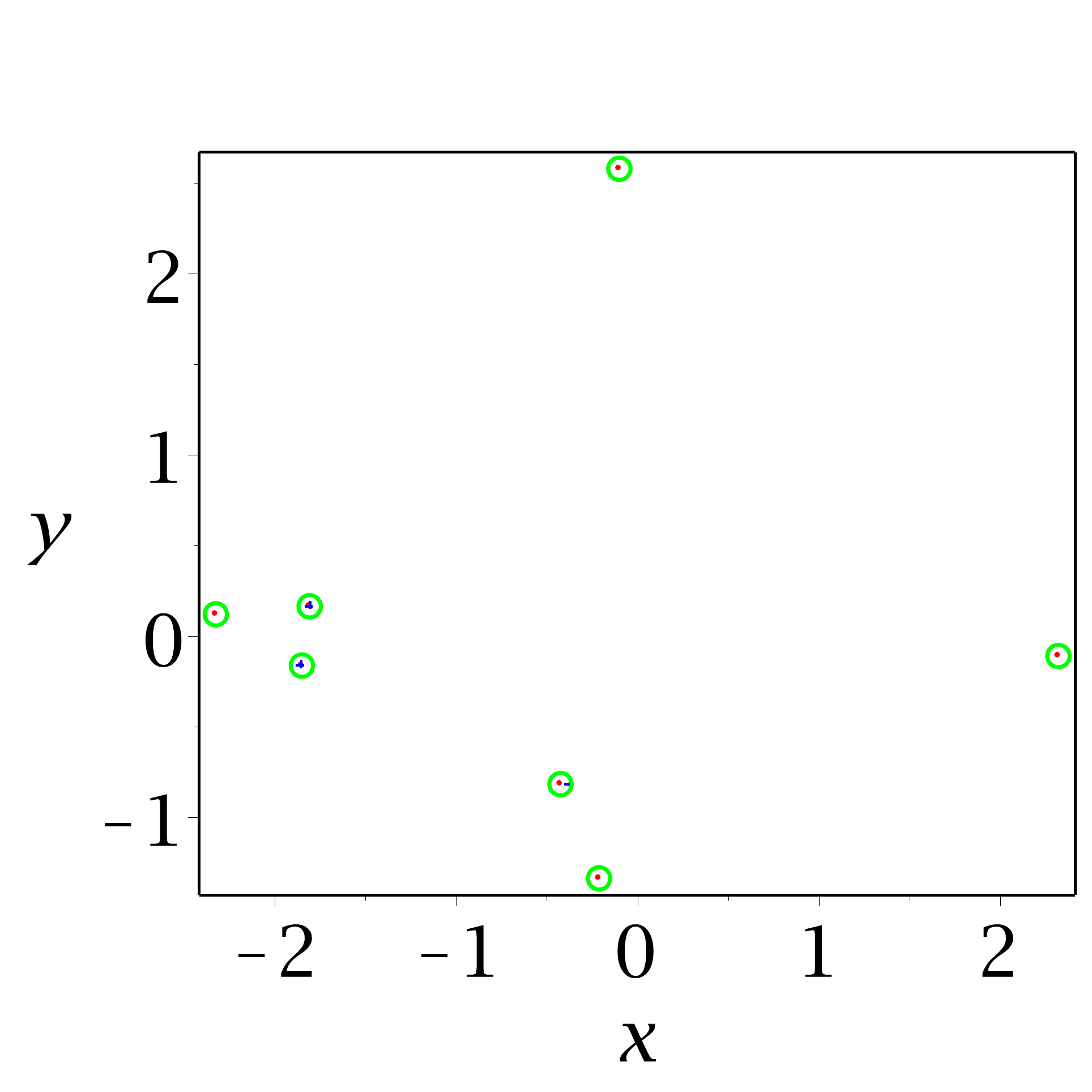}
{The seven natural clusters.}
\end{minipage}
\begin{minipage}[t]{0.45\linewidth}
\centering
\includegraphics[width=\textwidth]{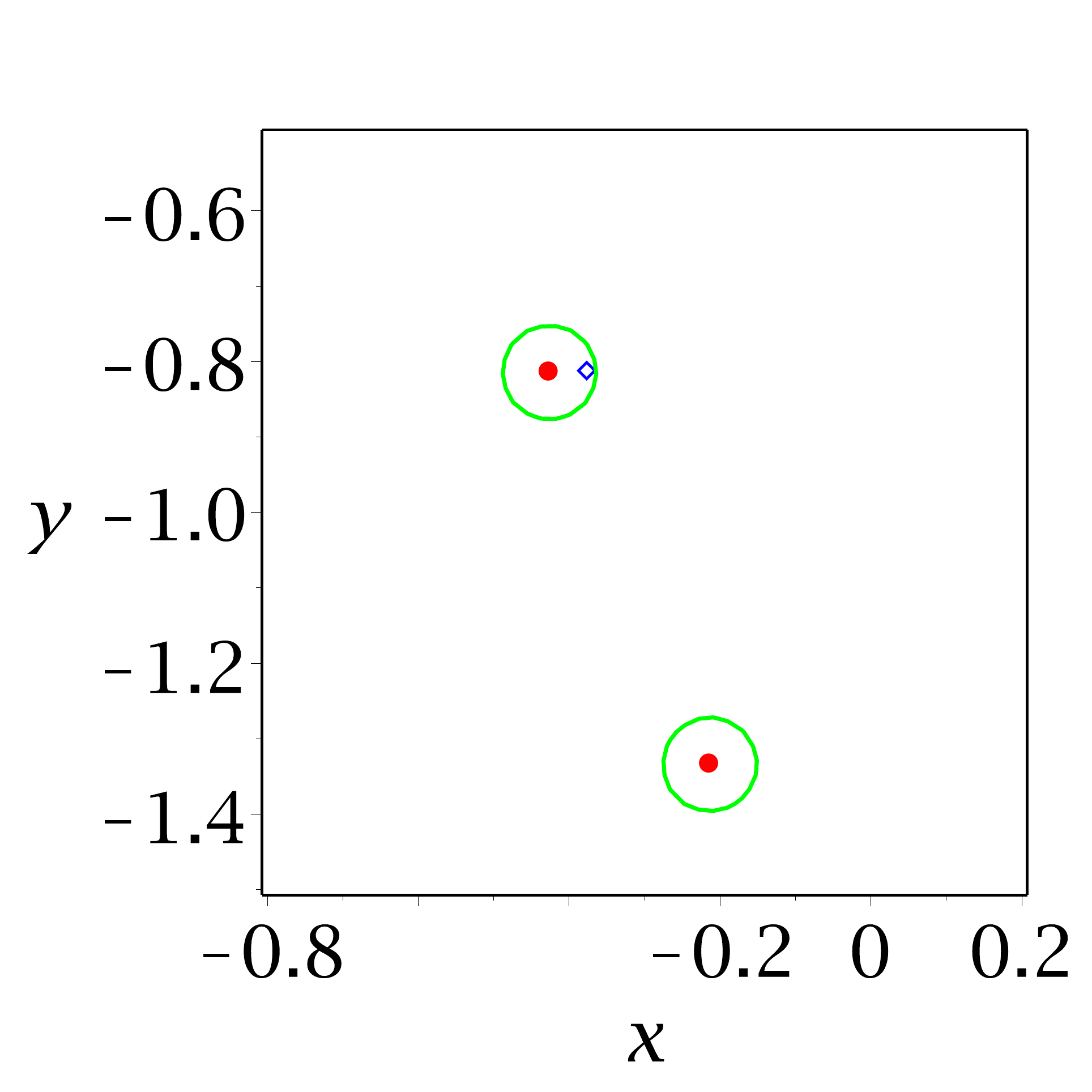}
{A close-up view of two natural clusters.}
\end{minipage}
  \caption{Natural clusters for Example~\ref{ex:res}.}
  \label{fig:cluster}
\end{figure}

\subsection{Visualizing space curve}

\begin{example}
  \label{ex:random3d}
  Let $f := \{3\,{x}^{7}{y}^{4}z+{y}^{12}+{z}^{12}-{y}^{5}{z}^{6}-5\,{x}^{7}{y}^{2}z+5\,x{y}^{3}{z}^{6}-8\,{x}^{3}yz-8\,{y}^{3}-y{z}^{2}+10\}$
  be a randomly generated trivariate polynomial.
  Let $\displaystyle F := \bigg\{\frac{\partial f}{\partial z}, \frac{\partial f}{\partial y}\bigg\}$.
  A visualization of it is depicted in Fig.~\ref{fig:random3d}.
  The option for {\sf plots:-intersectplot} is $grid=[60,60,60]$ and the time spent is $33.5$ seconds.
  The option for {\sf ApproxPlot} is $\epsilon=0.2$ and the time spent is $22.2$ seconds.
  No curve jumping is reported by {\sf ApproxPlot}.
  As we can see, {\sf Maple } has issues when visualizing the parts on the plane $y=-3$ and $y=3$.
\end{example}

\begin{figure}
\begin{minipage}[t]{0.46\linewidth}
\centering
\includegraphics[width=\textwidth]{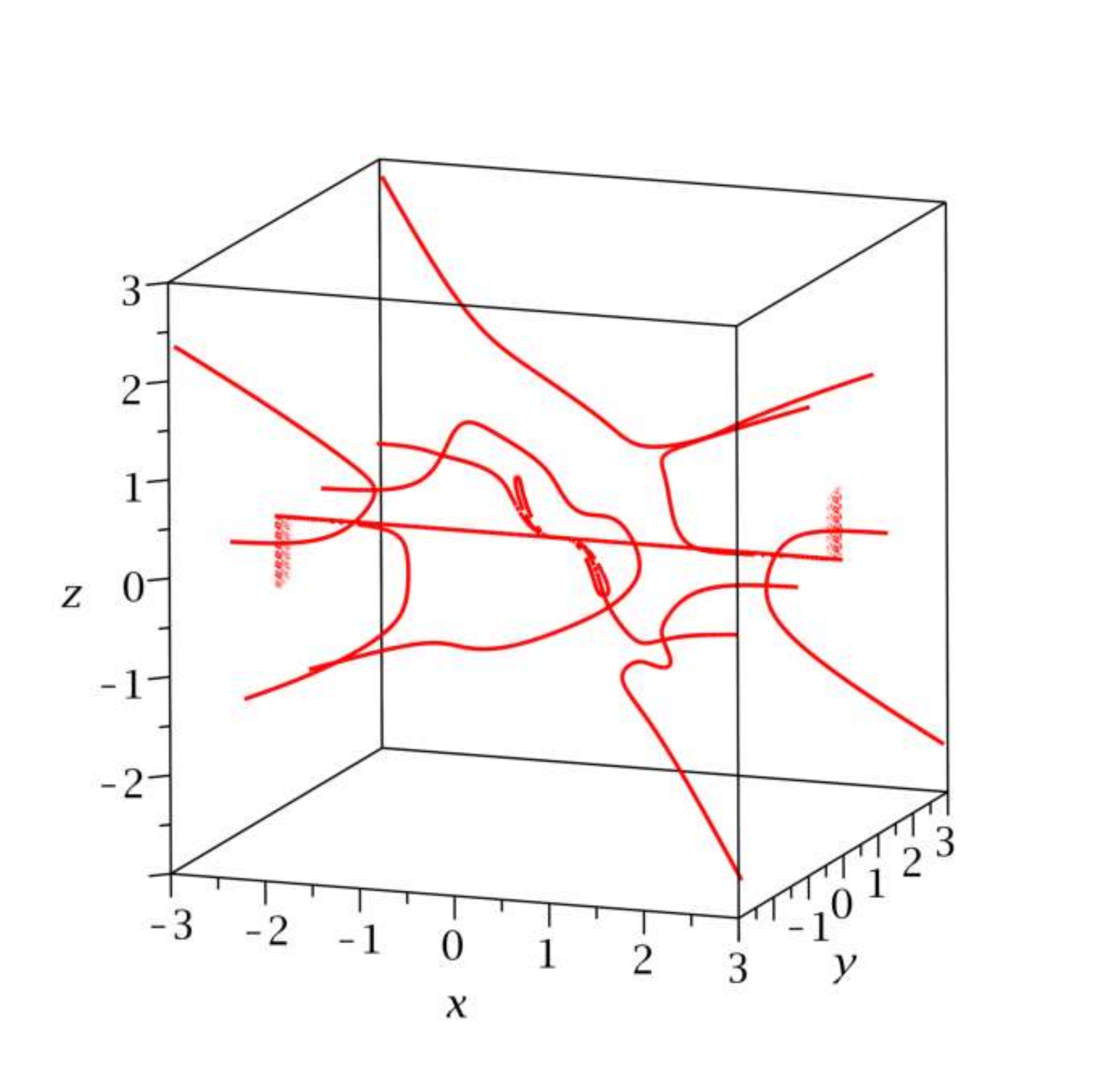}
{By Maple {\sf plots:-intersectplot}.}
\end{minipage}
\begin{minipage}[t]{0.51\linewidth}
\centering
\includegraphics[width=\textwidth]{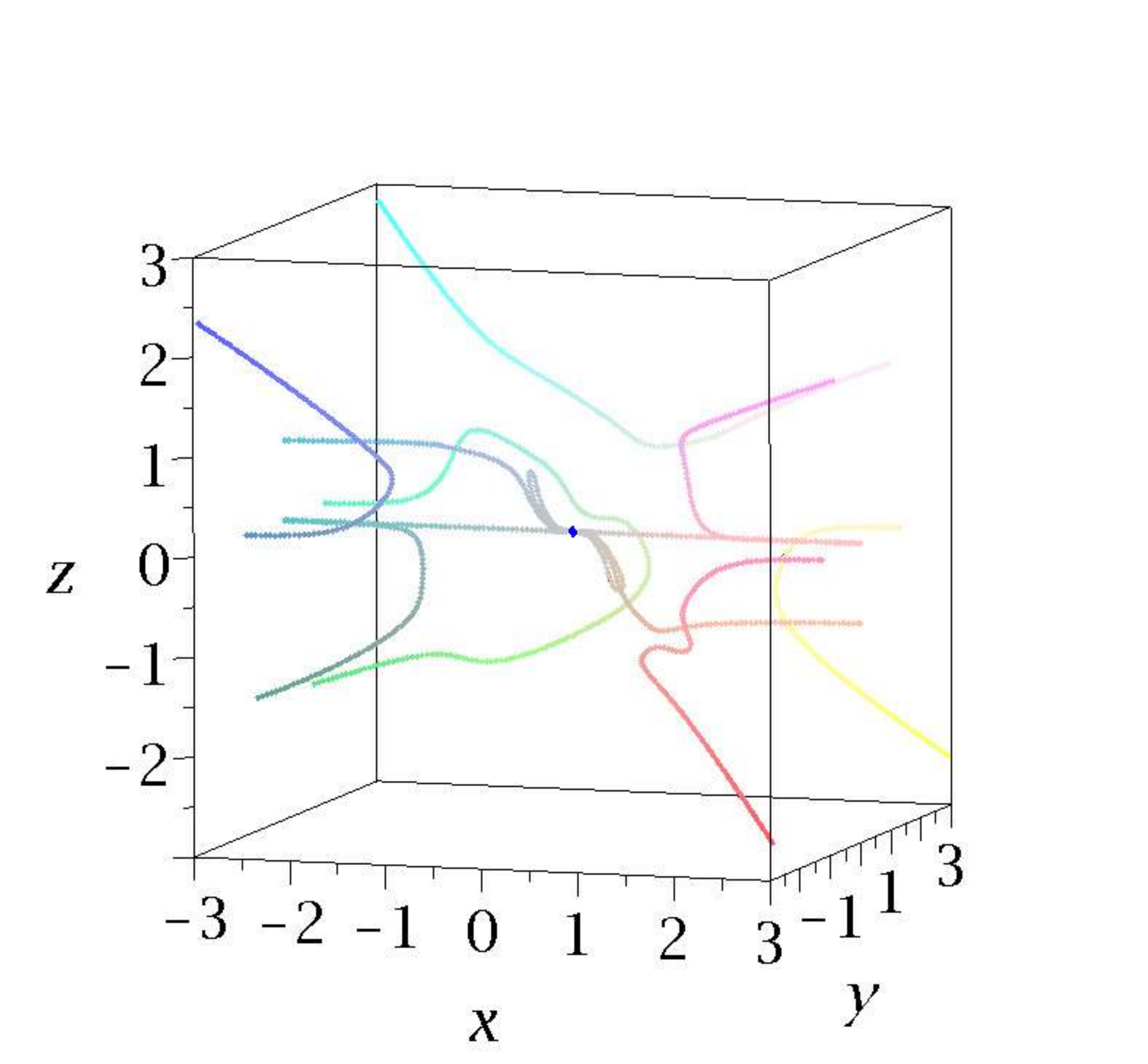}
{By {\sf ApproxPlot}.}
\end{minipage}
   \caption{Visualization of Example~\ref{ex:random3d}.}
  \label{fig:random3d}
\end{figure}

\begin{example}
  \label{ex:random3d-2}
  Let $f := -51\,{x}^{12}yz+31\,{x}^{4}y{z}^{8}+68\,{x}^{11}z-10\,{x}^{5}{y}^{4}{z}^{3}+{z}^{12}+91\,{x}^{5}{y}^{2}{z}^{3}-81\,{y}^{5}+10$
  and $\displaystyle F := \bigg\{f, \frac{\partial f}{\partial z}\bigg\}$.
   A visualization of it is depicted in Fig.~\ref{fig:random3d-2}.
  The option for {\sf plots:-intersectplot} is $grid=[60,60,60]$ and the time spent is $40.5$ seconds.
  The option for {\sf ApproxPlot} is $\epsilon=0.2$ and the time spent is $27.3$ seconds.
  No curve jumping is reported by {\sf ApproxPlot}.
  As we can see, {\sf Maple } clearly has visualization issues.
\end{example}

\begin{figure}
\begin{minipage}[t]{0.46\linewidth}
\centering
\includegraphics[width=\textwidth]{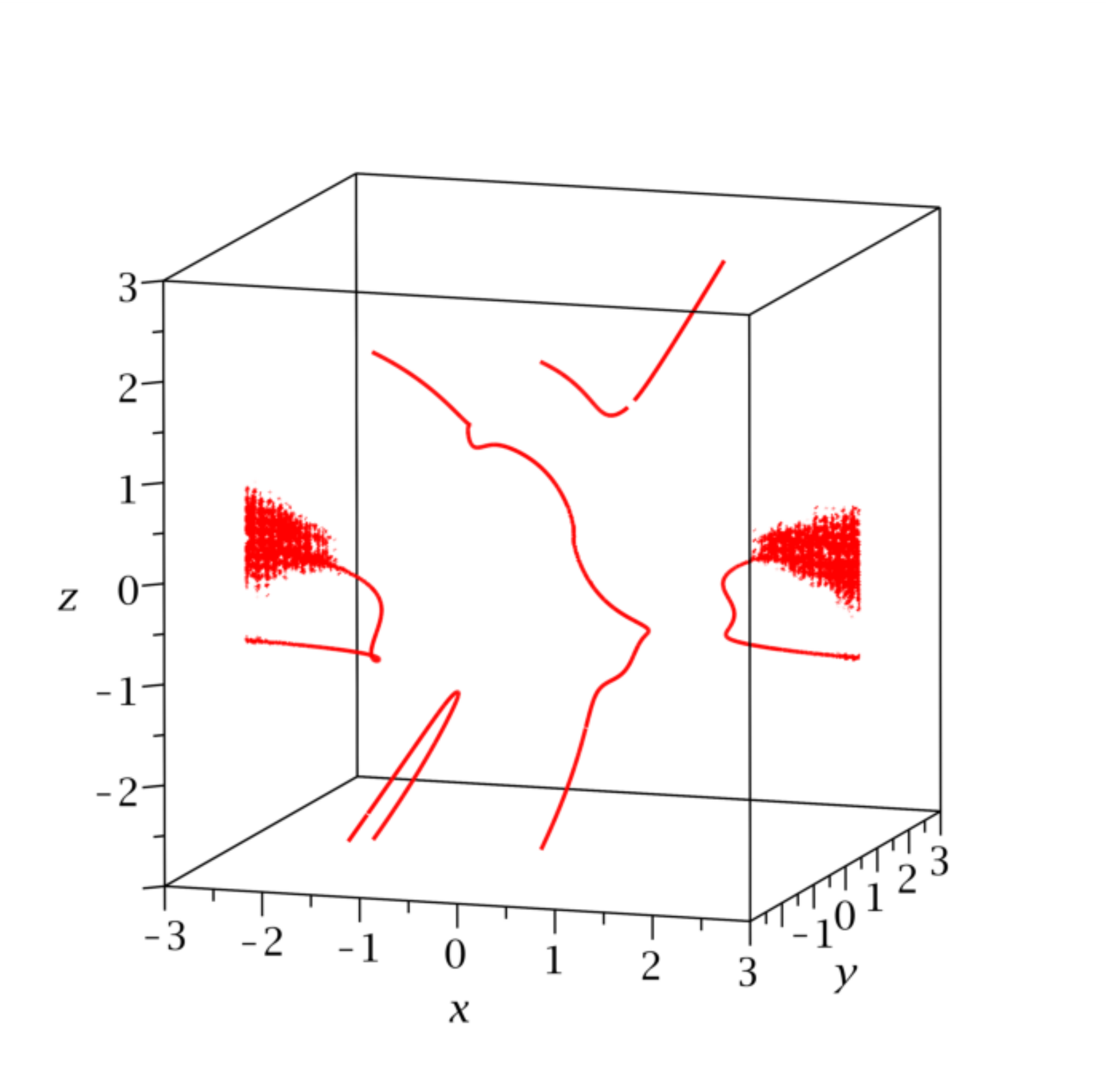}
{By Maple {\sf plots:-intersectplot}.}
\end{minipage}
\begin{minipage}[t]{0.51\linewidth}
\centering
\includegraphics[width=\textwidth]{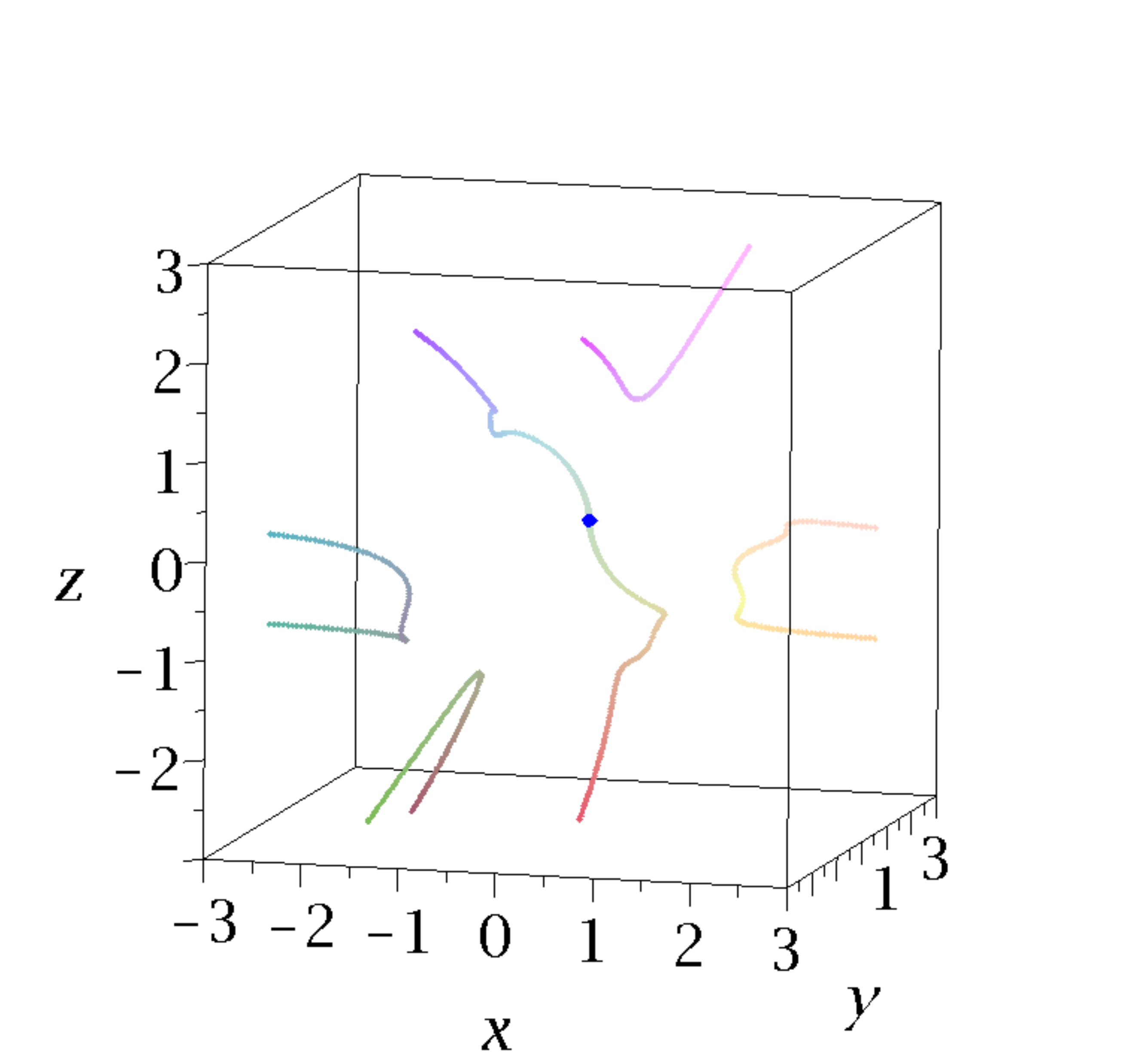}
{By {\sf ApproxPlot}.}
\end{minipage}
   \caption{Visualization of Example~\ref{ex:random3d-2}.}
  \label{fig:random3d-2}
\end{figure}

\begin{example}
  \label{ex:BarthDecic}
  We consider a famous surface called Barth Decic~\cite{Duzhin} defined
  by
  $$
  \begin{array}{rl}
    f := &8\, \left( -{r}^{4}{y}^{2}+{x}^{2} \right)  \left( -{r}^{4}{z}^{2}+{y}^{2} \right)\left( -{r}^{4}{x}^{2}+{z}^{2} \right)
    \left( {x}^{4}-2\,{x}^{2}{y}^{2}-2\,{x}^{2}{z}^{2}+{y}^{4}-2\,{y}^{2}{z}^{2}+{z}^{4}\right)\\
    &+ \left( 3+5\,r \right)  \left( -{w}^{2}+{x}^{2}+{y}^{2}+{z}^{2} \right) ^{2}
    \left( {x}^{2}+{y}^{2}+{z}^{2}- \left( 2-r \right) {w}^{2} \right) ^{2}{w}^{2}
  \end{array}
  $$
  and $\displaystyle F := \bigg\{f, \frac{\partial f}{\partial z}\bigg\}$.
  
  We consider the case that $w=1$ and $r=2$.
  The polynomial $\displaystyle \frac{\partial f}{\partial z}$ has two irreducible factors, one of them is $z$.
  Let $F_1=[f, z]$ and $\displaystyle F_2=\bigg[f, 1/z\cdot\frac{\partial f}{\partial z}\bigg]$.
  We plot $F_1$ and $F_2$ separately and display them together.
  The option for {\sf plots:-intersectplot} is $grid=[80,80,80]$
  and the total time spent is $108.7+154.8=263.5$ seconds.
  
   The option for {\sf ApproxPlot} is $\epsilon=0.1$ and the time spent is $1704.3$ seconds.
  No curve jumping is reported by {\sf ApproxPlot}.
   A visualization of it is depicted in Fig.~\ref{fig:BarthDecic}.
  As we can see, {\sf Maple } clearly has visualization issues.
\end{example}

\begin{figure}
\begin{minipage}[t]{0.48\linewidth}
\centering
\includegraphics[width=\textwidth]{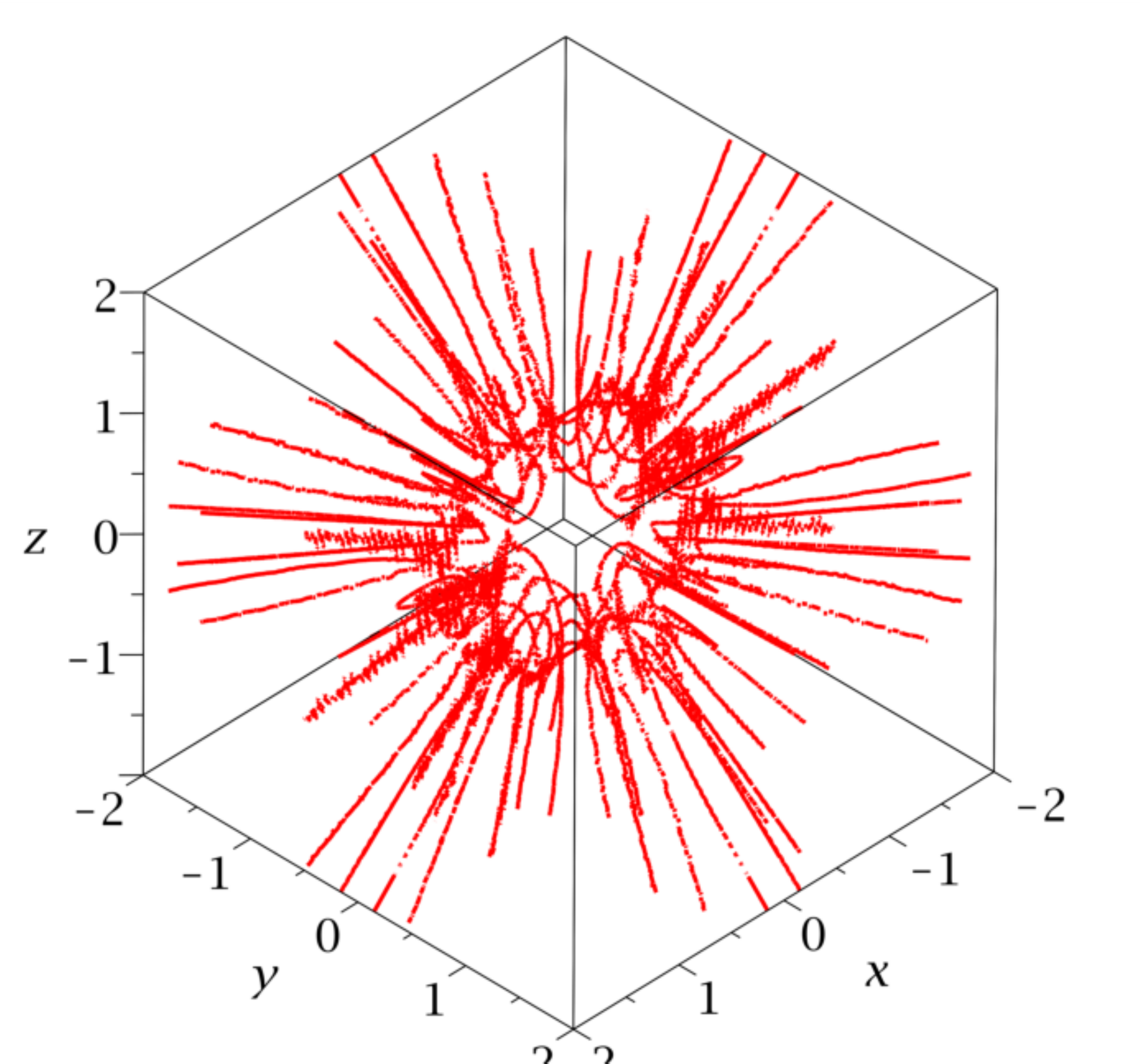}
{By Maple {\sf plots:-intersectplot}.}
\end{minipage}
\begin{minipage}[t]{0.5\linewidth}
\centering
\includegraphics[width=\textwidth]{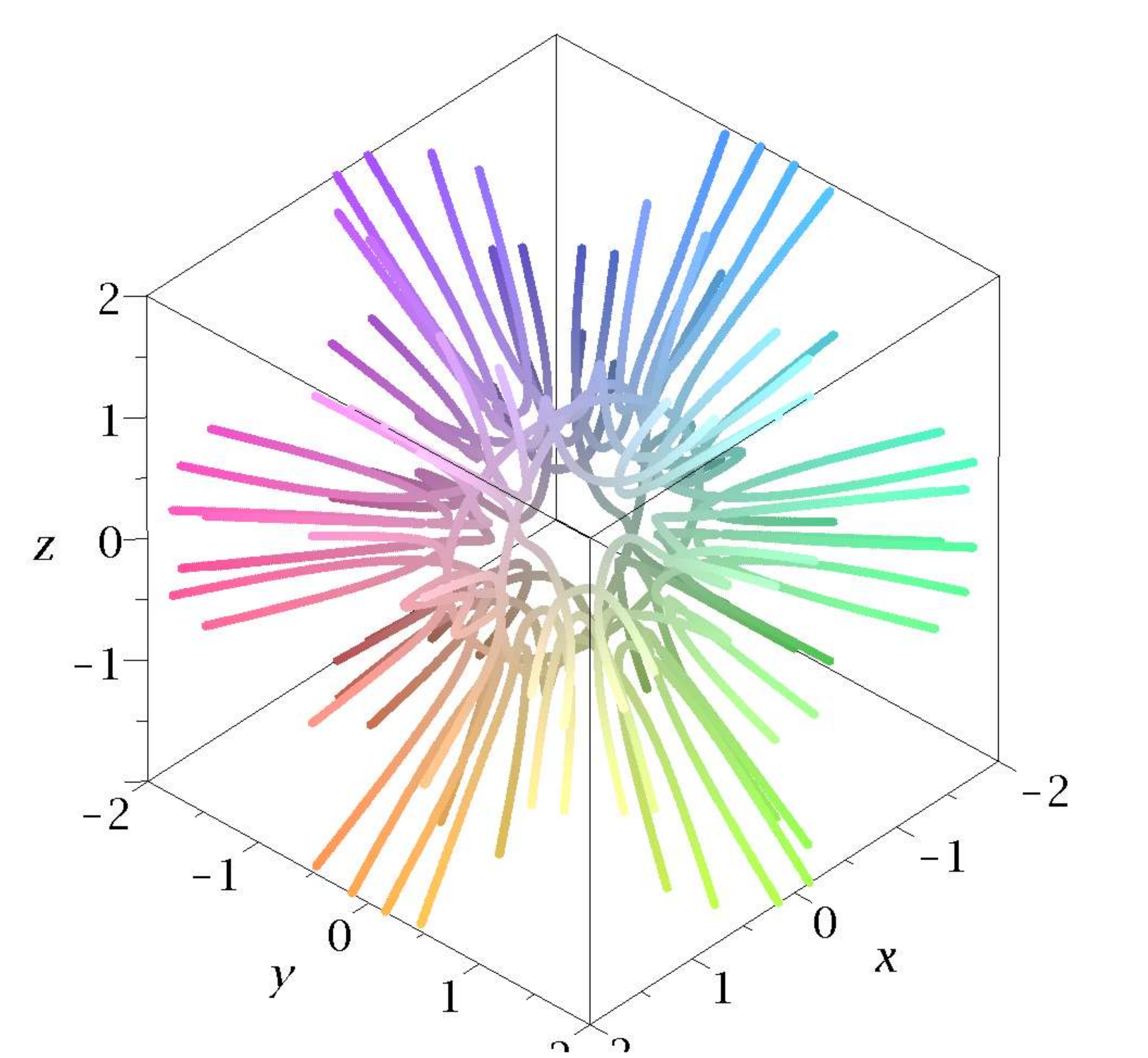}
{By {\sf ApproxPlot}.}
\end{minipage}
   \caption{Visualization of Example~\ref{ex:BarthDecic}.}
  \label{fig:BarthDecic}
\end{figure}

\begin{example}
  \label{ex:BarthSextic}
  We consider a famous surface called Barth Sextic~\cite{Duzhin} defined
  by $f := 4\, \left( {r}^{2}{x}^{2}-{y}^{2} \right)  \left( {r}^{2}{y}^{2}-{z}^{2} \right)  \left( {r}^{2}{z}^{2}-{x}^{2} \right)
  - \left( 1+2\,r\right)  \left( -{w}^{2}+{x}^{2}+{y}^{2}+{z}^{2} \right) ^{2}{w}^{2}$
  and $\displaystyle F := \bigg\{f, \frac{\partial f}{\partial z}\bigg\}$.  
  Consider the case that $w=1$ and $\displaystyle r=\frac{\sqrt{5}+1}{2}$.
  The option for {\sf plots:-intersectplot} is $grid=[100,100,100]$, $maxlev=8$ and the time spent is $107.7$ seconds.
  The option for {\sf ApproxPlot} is $\epsilon=0.1$.
  There are two ways to plot the curve by {\sf ApproxPlot}.
  One is to use an approximate value of $r$ and plot the curve in $\R^3$.
  In such case, an exact computation shows that there are no singular points.
  As a result, there is curve jumping and some part of the curve is missing after tracing.
  Instead, we use the technique of pseudo singular points, and no curve jumping is reported.
   The time spent is $76.7$ seconds.
  Another method is to encode the algebraic number $r$ by its defining polynomial $r^2-r-1$.
  We then trace the curve defined by $\displaystyle \bigg\{f, \frac{\partial f}{\partial z}\bigg\}, r^2-r-1$ in $\R^4$
  and take its projection in $\R^3$.
   The time spent in this way is $118.3$ seconds.
  No curve jumping is reported by {\sf ApproxPlot}.
  A visualization of the space curve is depicted in Fig.~\ref{fig:BarthSextic}.
\end{example}

\begin{example}
  \label{ex:EndrassOctic}
  Consider $Endra\ss$ Octic surface~\cite{Duzhin} defined by
  $$
  \begin{array}{rl}
    f  := &64\, \left( -{w}^{2}+{x}^{2} \right)  \left( -{w}^{2}+{y}^{2} \right) \left(  \left( x+y \right) ^{2}-2\,{w}^{2} \right)\left(  \left( x-y \right) ^{2}-2\,{w}^{2} \right)\\
    &- \bigg( -4\, \left( 1+\sqrt {2}\right)  \left( {x}^{2}+{y}^{2} \right) ^{2}+ \left( 8\, \left( 2+\sqrt {2} \right) {z}^{2} +2\, \left( 2+7\,\sqrt {2} \right) {w}^{2}\right)\left( {x}^{2}+{y}^{2} \right)\\
    &-16\,{z}^{4}+8\, \left( 1-2\,\sqrt {2} \right) {z}^{2}{w}^{2}- \left( 1+12\,\sqrt {2} \right) {w}^{4} \bigg) ^{2}
  \end{array}
  $$
    and $\displaystyle F := \bigg\{f, \frac{\partial f}{\partial z}\bigg\}$.  
    We consider the case that $w=1$.
    The algebraic number $\sqrt{2}$ is replaced by its approximation.
  The option for {\sf plots:-intersectplot} is $grid=[100,100,100]$ and the time spent is $219.3$ seconds.
  For {\sf ApproxPlot}, 
  we use the technique of pseudo singular points.
   Choosing $\epsilon=0.1$, the time spent is $643.1$ seconds with curve jumping reported.
   Setting $\epsilon=0.05$, the time spent is $1921.6$ seconds and no curve jumping is reported.
   Visualization of the curve using the two values of $\epsilon$ is similar and is depicted in Fig.~\ref{fig:EndrassOctic}.
\end{example}

\begin{figure}
\begin{minipage}[t]{0.48\linewidth}
\centering
\includegraphics[width=\textwidth]{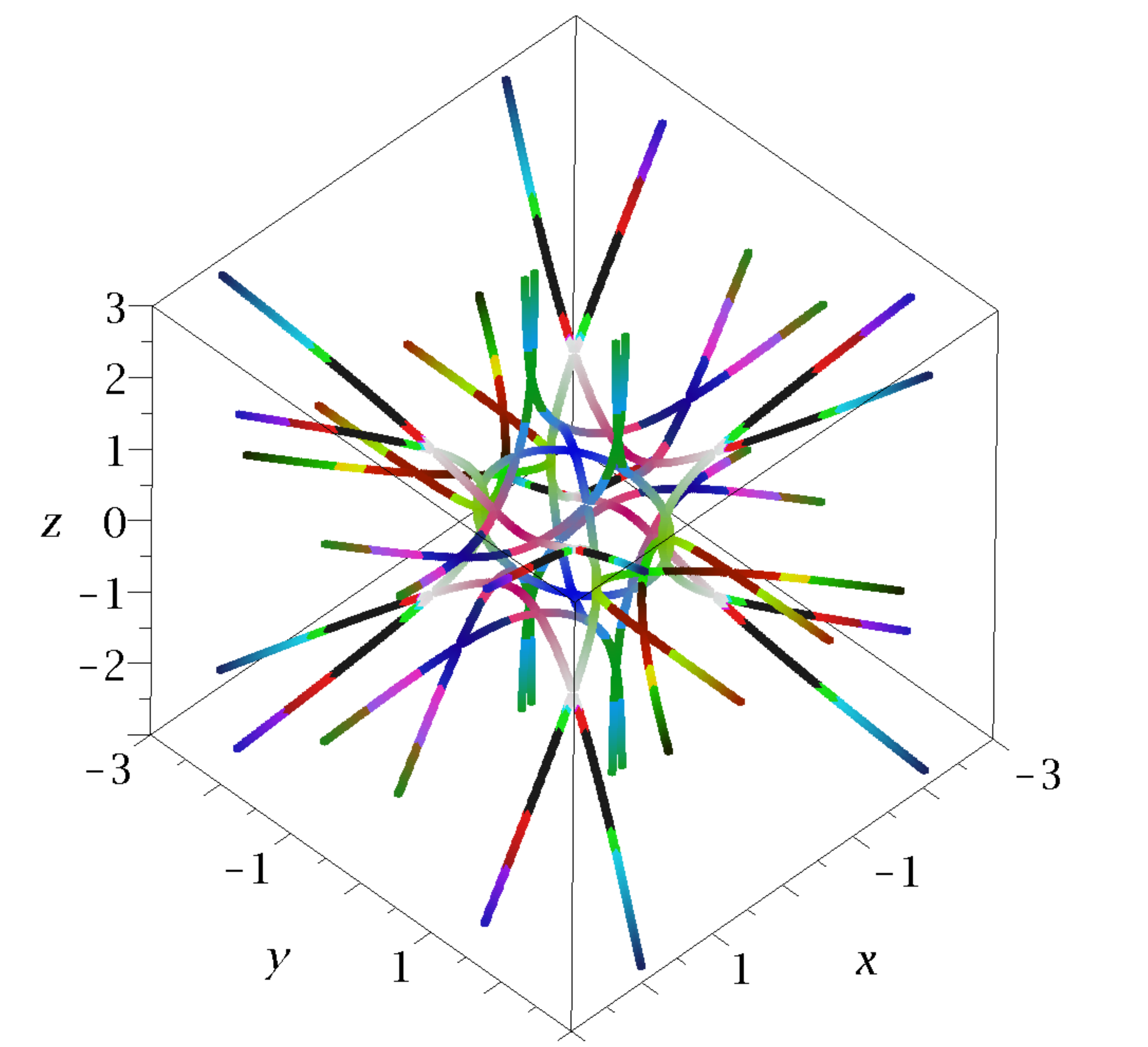}
{By {\sf ApproxPlot} ({\sf plots:-intersectplot} can achieve similar effect).}
\end{minipage}
\begin{minipage}[t]{0.43\linewidth}
\centering
\includegraphics[width=\textwidth]{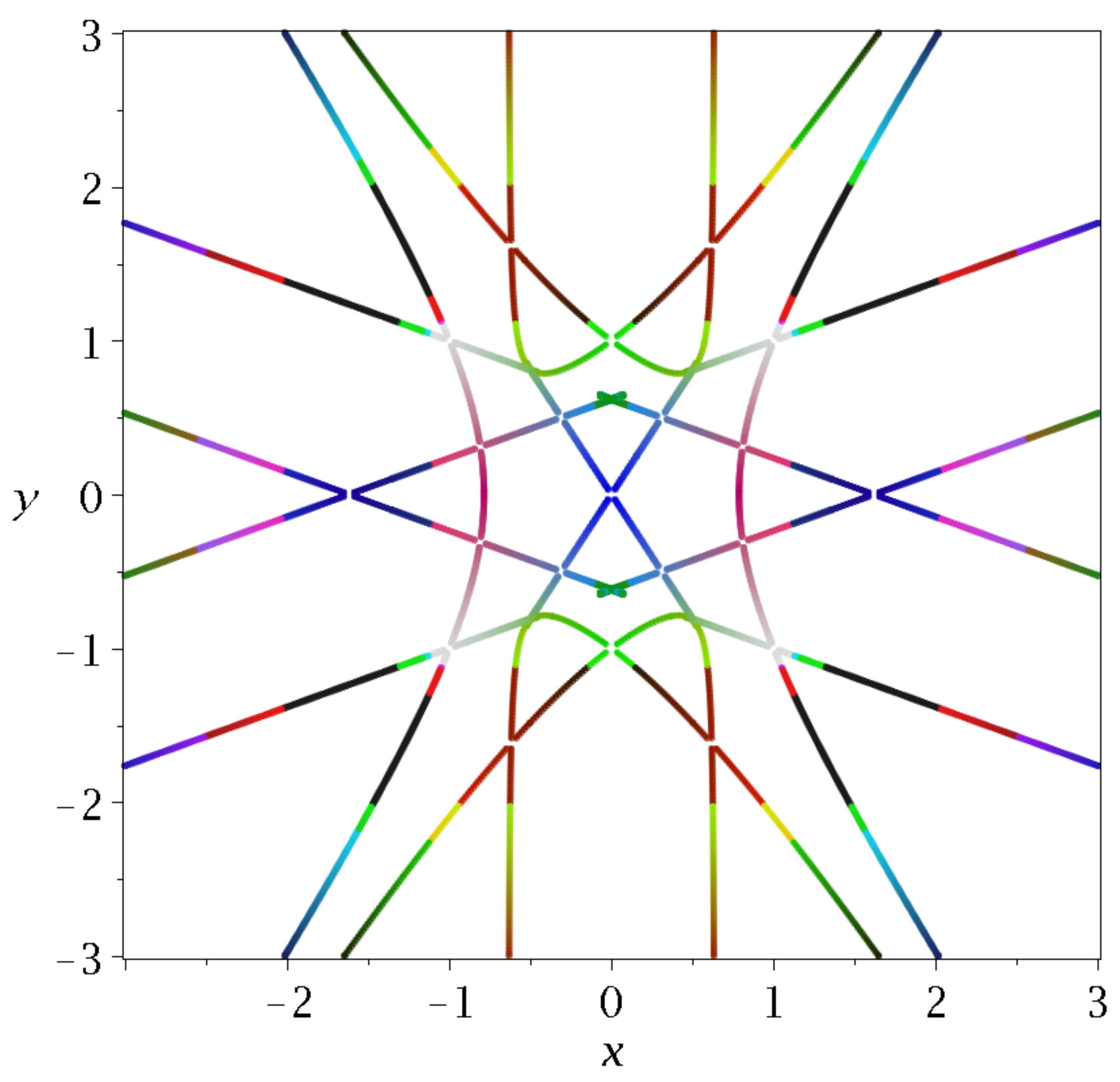}
{Projection on $(x,y)$ space.}
\end{minipage}
   \caption{Visualization of Example~\ref{ex:BarthSextic}.}
  \label{fig:BarthSextic}
\end{figure}

\begin{figure}
\begin{minipage}[t]{0.48\linewidth}
\centering
\includegraphics[width=\textwidth]{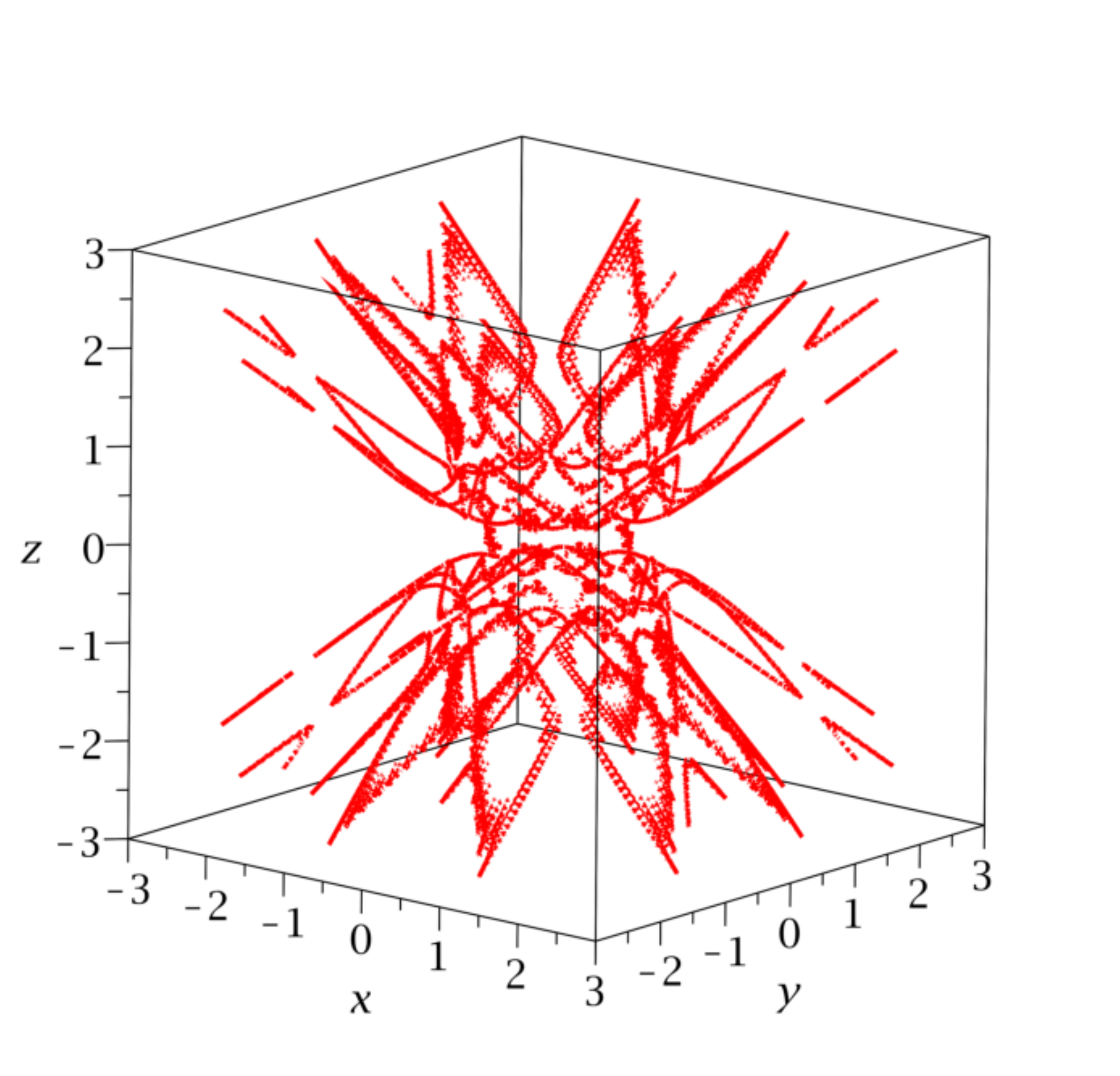}
{By Maple {\sf plots:-intersectplot}.}
\end{minipage}
\begin{minipage}[t]{0.5\linewidth}
\centering
\includegraphics[width=\textwidth]{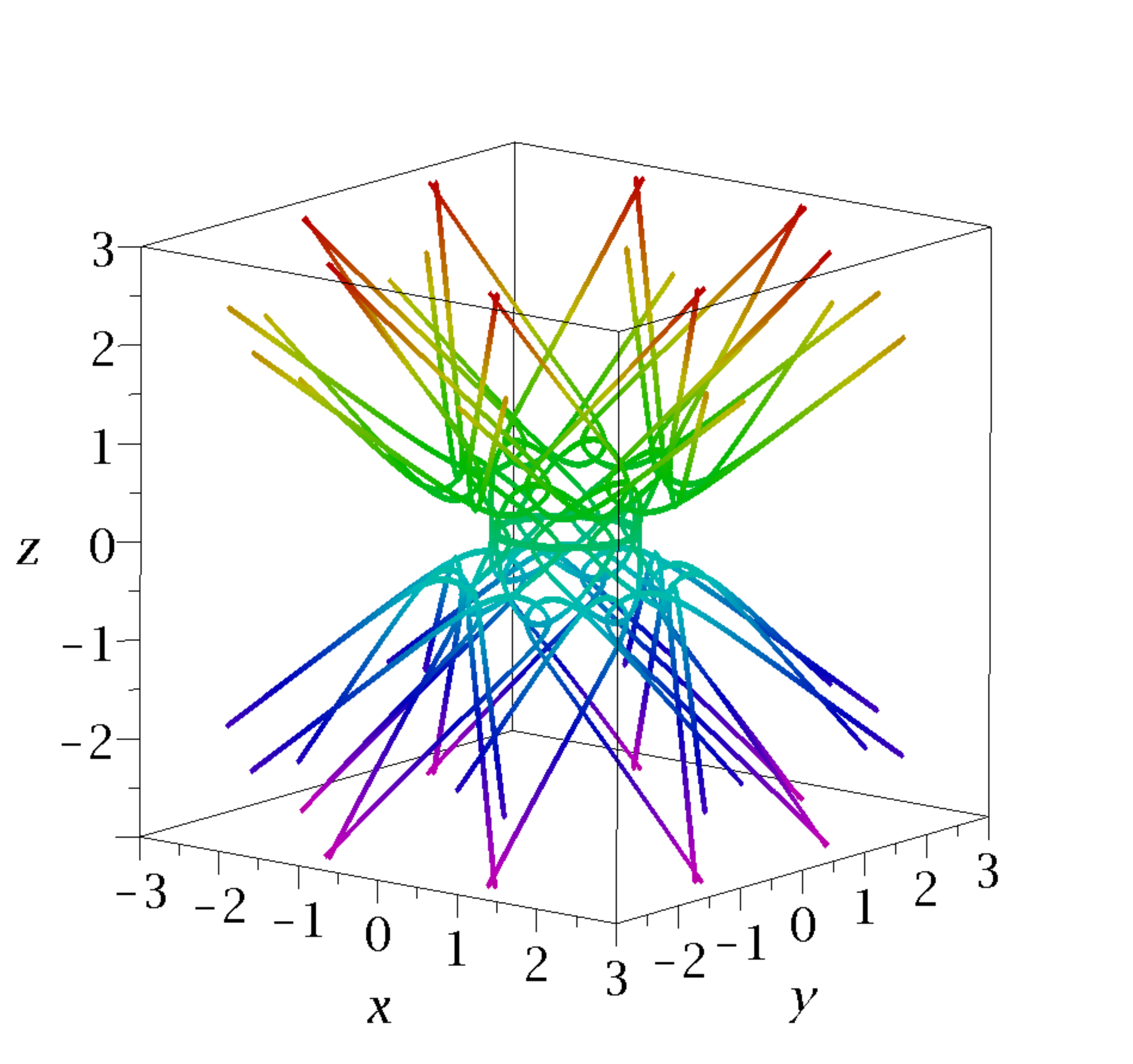}
{By {\sf ApproxPlot}.}
\end{minipage}
   \caption{Visualization of Example~\ref{ex:EndrassOctic}.}
  \label{fig:EndrassOctic}
\end{figure}

\section{Conclusion and future work}
\label{sec:con}
In this paper, we presented algorithms for visualizing
planar and space implicit algebraic curves with singularities.
The theoretical algorithm guarantees the polygonal approximation
$\epsilon$-close to the curve.
We introduced several strategies to turn the theoretical
algorithm to be practical and illustrate its effectiveness by examples.
One bottleneck of the algorithm is the computation of singular points,
whose efficiency might be improved if the curve is known to be
the resultant or discriminant of two polynomials~\cite{Imbach2017}.

The algorithm presented in this paper can also be used
for tracing space curves with singularities in ambient space
with dimension $>3$,
which is important for plotting border curves of parametric systems.
An efficient algorithm for computing isolated
singular points will be important for this method.

From a numeric point of view, singular points
are not stable w.r.t. perturbation.
A small perturbation may transform a singular point
to be exactly nonsingular but still
be ill-conditioned in the numerical sense.
We proposed heuristic strategies treating
these ``pseudo-singular" cases  and ``true-singular''
cases in the same away.
A limitation of current method is that we assume that the corank of the Jacobian matrix
is one.
A possible direction to remove such assumptions is to generalize the penalty method
for computing witness points in~\cite{Wu2017}
to tracing curves.

\acknowledgements{\rm The authors would like to thank Chee K. Yap for many valuable suggestions,
Michael Monagan for his advice on usage of {\sf Plots:-implicitplot}, and anonymous referees for helpful suggestions and comments.}



\end{document}